\documentclass[letterpaper,11pt]{article}

\usepackage{setspace}
\usepackage{subcaption}
\usepackage{enumerate}
\usepackage{amsmath}
\usepackage{amsfonts}
\usepackage{graphicx}
\usepackage{tcolorbox}
\usepackage{amssymb}
\usepackage{multirow}
\usepackage{geometry}
\usepackage{soul}
\usepackage{colortbl}
\usepackage{wrapfig}
\usepackage{algorithm}
\usepackage{algorithmicx}
\usepackage{algpseudocode}
\usepackage{amsthm}
\newif\ifanon   \anonfalse
\newif\ifdraft  \drafttrue
\ifdraft
  \usepackage{todonotes}
\else
  \usepackage[disable]{todonotes}
\fi
\usepackage{mathtools} 
\usepackage{mathrsfs}

\usepackage[pdftex, plainpages = false, pdfpagelabels, 
                 bookmarks=false,
                 bookmarksopen = true,
                 bookmarksnumbered = true,
                 breaklinks = true,
                 linktocpage,
                 pagebackref,
                 colorlinks = true,  % was true
                 linkcolor = blue,
                 urlcolor  = blue,
                 citecolor = red,
                 anchorcolor = green,
                 hyperindex = true,
                 hyperfigures
                 ]{hyperref}

\usepackage{thmtools} 
\usepackage{thm-restate}

\newcommand{\MVD}{\Problem{Metric Violation Distance} }

\newcommand{\dist}{{\sf dist}}
\newcommand{\Problem}[1]{\textsc{#1}}

\algtext*{EndWhile}% Remove "end while" text
\algtext*{EndIf}% Remove "end if" text
\algtext*{EndFor}% Remove "end for" text
\algtext*{EndProcedure}
\algtext*{EndFunction}

\newtheorem{lemma}{Lemma}

\numberwithin{lemma}{section}

\newtheorem{theorem*}[lemma]{Theorem*}
\newtheorem{corollary}[lemma]{Corollary}
\newtheorem{definition}[lemma]{Definition}
\newtheorem{observation}[lemma]{Observation}

\newtheorem{claim}[lemma]{Claim}
\newtheorem{proposition}[lemma]{Proposition}

\usepackage{wrapfig}
\usepackage{bm}

\definecolor{problemrule}{RGB}{120,160,200}
\newtcolorbox{problembox}[1]{%
  colback=white, colframe=problemrule, boxrule=0.6pt,
  arc=2mm, outer arc=2mm,
  left=8pt, right=8pt, top=7pt, bottom=7pt,
  before upper={\textsc{#1}\par\smallskip}%
}

\title{How to Fix a Broken Metric:\\ A Linear Kernel, Tight Bounds, and Tree Metrics}
\ifanon
  \author{}
\else
  \author{%
    Sayan Bandyapadhyay\thanks{Portland State University. \texttt{sayanb@pdx.edu}}
    \and
    William Lochet\thanks{CNRS, LIRMM. \texttt{william.lochet@gmail.com}}
    \and
    Daniel Lokshtanov\thanks{UCSB. \texttt{daniello@ucsb.edu}}
    \and
    Saket Saurabh\thanks{Institute of Mathematical Sciences. \texttt{saket@imsc.res.in}}
    \and
    Chinmay Sonar\thanks{\texttt{chinmaysonar96@gmail.com}}
    \and
    Vaishali Surianarayanan\thanks{UC Santa Cruz. \texttt{vaishalisurianarayanan@gmail.com}}
    \and
    Jie Xue\thanks{New York University Shanghai. \texttt{jiexue@nyu.edu}}
  }
\fi
\date{}
\begin{document}
 
\pagenumbering{gobble}

\maketitle
\begin{abstract}
Given a complete graph whose edge weights represent dissimilarities, the \MVD
problem asks whether at most \(k\) of the weights can be changed so that the
resulting weights form a metric. Motivated by metric repair for noisy data, this problem has recently been studied primarly from the approximation algorithm community, where the best current algorithm is a \(\mathcal{O}(\log n)\)-approximation in polynomial time by
Cohen-Addad, Fan, Lee and de Mesmay [SIAM J. Comput., 2025]. 

In the context of parameterized complexity, Fan, Gilbert, Raichel, Sonthalia and Van
Buskirk proved the problem fixed-parameter tractable, in \(k^{\mathcal{O}(k)}n^{\mathcal{O}(1)}\)
time [SWAT 2020]. Fomin, Golovach and More [IPEC 2026] showed that, in the case of ultrametic, the problem has a $O(k^2)$ kernel as well as a single-exponential running time. They also ask the following two questions about general metric: can the running time be made single-exponential and does the problem admit
a polynomial kernel. They also ask what is the parameterized complexity of the tree-metric
analogue?  We answer all three questions.

We give a \(2^{\mathcal{O}(k)}n^{\mathcal{O}(1)}\)-time algorithm, and this is best possible:
unless ETH fails there is no \(2^{o(k)}n^{\mathcal{O}(1)}\)-time algorithm, even when every
input distance lies in \(\{1,2,3\}\).  We also give a kernel with at most \(6k\)
vertices.  The kernel is \emph{solution-lifting}, so it may be placed in front of
any approximation algorithm; used before the \(\mathcal{O}(\log n)\)-approximation of
Cohen-Addad, Fan, Lee and de Mesmay it yields an approximation factor of
\(\mathcal{O}(\log \mathrm{OPT})\), at no asymptotic cost in running time.  
 
Both results extend to an interval generalization in
which every edge \(e\) carries an observed value \(M_e\) and an admissible range
\([A_e,B_e]\) within which it may be reassigned; there the kernel has \(7k\)
vertices.

Finally, we show that \textsc{Tree Metric Violation Distance} is fixed-parameter
tractable, and solvable in \(k^{\mathcal{O}(k)}n^{\mathcal{O}(1)}\) time.
\end{abstract}

\newpage
\pagenumbering{arabic}
\section{Introduction}

Many algorithms in data analysis use distances in a metric space to represent dissimilarities between data points. This assumption is useful both conceptually and algorithmically: metric structure is central in clustering, metric embeddings, and many approximation algorithms. In practice, however, distances obtained from measurements, similarities, or incomplete data are often noisy, and may fail even the most basic metric constraint, namely the triangle inequality.

This motivates the problem of repairing a distance matrix. Classical metric repair, often formulated as metric nearness, asks for a nearby metric, typically with respect to an \(\ell_p\)-type objective \cite{veldt2018projection,tang2022efficient,brickell2008metric}. Such repairs may change a large fraction of the input distances. The \textsc{Metric Violation Distance} problem instead asks for a sparse repair: given pairwise distances \(\delta:E(G) \rightarrow \mathbb{R}_{\geq0} \) on a complete graph, modify as few edge lengths as possible so that the resulting distances form a metric \cite{gilbert2017sparse,fan2018metric}.  Throughout, we work with the parameterized decision version: the input is \((G,\delta,k)\), and the question is whether some \(S\subseteq E(G)\) with \(|S|\le k\) admits a reassignment of the lengths of its edges after which \(\delta\) is a metric.  The parameter is \(k\), the number of distances one is allowed to touch, and we call such an \(S\) a \emph{feasible support}.

The problem has mainly been studied from the viewpoint of approximation algorithms. Fan, Raichel, and Van Buskirk initiated this line, proving hardness results and giving an \(\mathcal{O}(\mathrm{OPT}^{1/3})\)-approximation \cite{fan2018metric}. More recently, Cohen-Addad, Fan, Lee, and de Mesmay gave an \(\mathcal{O}(\log n)\)-approximation for \textsc{Metric Violation Distance} \cite{cohenaddad2022fitting}. Whether the problem admits a constant-factor approximation remains open.

The same fitting approach has also been considered for more specific metrics, most notably ultrametrics and tree metrics~\cite{cohenaddad2022fitting,Charikar_ultra,kipouridis2023tree}. These metrics are central in applications where distances are expected to come
from an underlying hierarchy: ultrametrics model hierarchical clusterings
\cite{ailon2011fitting}, while tree metrics are the basic distance model in
phylogenetics and numerical taxonomy
\cite{sneath1962numerical,cavalli1967phylogenetic,farris1972estimating}. A constant-factor approximation is known for these two variants \cite{cohenaddad2022fitting,Charikar_ultra,kipouridis2023tree}, but to our knowledge, nothing is known regarding their parameterized complexity. 

In this paper we study the parameterized complexity of distance repair, parameterized by the number \(k\) of distances that may be modified.  Fan, Raichel and Van Buskirk \cite{fan2018metric} asked whether \MVD on complete graphs is fixed-parameter tractable in the solution size.  This was answered by Fan, Gilbert, Raichel, Sonthalia and Van Buskirk \cite{fan_et_al:LIPIcs.SWAT.2020.25}, who gave a bounded search-tree algorithm for the more general class of \(\varsigma\)-chordal graphs, those in which every chordless cycle has length at most \(\varsigma\).  Complete graphs are \(3\)-chordal, so their algorithm solves \MVD in \(k^{\mathcal{O}(k)}n^{\mathcal{O}(1)}\) time.  Recently, Fomin Golovach and More \cite{FominGolovachMore2026} raised three questions about this state of affairs:
\begin{enumerate}[(Q1)]
    \item Can the \(k^{\mathcal{O}(k)}\) dependence be improved to single-exponential?
    \item Does \MVD admit a polynomial kernel?
    \item What is the parameterized complexity of the tree-metric analogue?
\end{enumerate}
We answer all three.

Questions (Q1) and (Q2) are natural, and (Q2) has a concrete obstruction behind it: Fan et al.\ explicitly observe \cite[Sec.~5]{fan_et_al:LIPIcs.SWAT.2020.25} that the edges of the broken chordless cycles, which drive their search tree, do \emph{not} form a kernel of size bounded by the parameter.  Their example is \(K_n\) with one edge of weight \(n+1\) and all others of weight \(1\): the optimum has size \(1\), while \(2n-3\) edges survive.  A kernel must therefore come from somewhere else.

Our results hold for a natural interval generalization of the problem that we call \textsc{Interval Metric Violation}. For every edge \(e\), the input consists of an observed value \(M_e\) and an allowed interval \([A_e,B_e]\). If \(e\) is not modified, its final length must remain \(M_e\); if \(e\) is modified, its final length may be chosen anywhere in \([A_e,B_e]\). This captures the common situation where the observed distances come with uncertainty or confidence intervals. The standard \textsc{Metric Violation Distance} problem is recovered by choosing sufficiently permissive intervals for modified edges.

\subsection{Our Results and Techniques}

Our first result answers (Q1): a single-exponential algorithm, improving the \(k^{\mathcal{O}(k)}n^{\mathcal{O}(1)}\) bound of \cite{fan_et_al:LIPIcs.SWAT.2020.25}.

\begin{restatable}{theorem}{fpttheorem}\label{thm:FPT}
    \textsc{Interval Metric Violation} can be solved in \(2^{\mathcal{O}(k)} \cdot n^{\mathcal{O}(1)}\) time.
\end{restatable}

Both this and the kernel below rest on a characterization of the sets \(S\) that can serve as supports.  Every edge \(e\) has a range of admissible final lengths, namely \([A_e,B_e]\) if \(e\in S\) and the single value \(M_e\) otherwise; write \(a_S(e)\) and \(b_S(e)\) for its endpoints.  Call a cycle \emph{bad} for \(S\) if one of its edges \(h\) satisfies \(a_S(h)>\sum_{f\ne h}b_S(f)\), and call \(S\) \emph{good} if no cycle is bad for it.  Goodness is testable in polynomial time by a shortest-path computation with lengths \(b_S\), and Lemma \ref{lem:good_feasible} shows it to be equivalent to feasibility.

Section \ref{section:FPT} then proceeds by random separation.  Color the edges red and blue uniformly at random, and let \(F_\chi\) consist of the red edges whose deletion from the red set destroys goodness.  If an inclusion-minimal solution \(O\) is colored red, and the edges of the bad cycles certifying the minimality of each \(e\in O\) are colored blue, then \(F_\chi=O\) exactly. To obtain the desired running time, those certifying cycles must not be large in aggregate: writing \(Q\) for their edges outside \(O\), the coloring succeeds with probability \(2^{-|O|-|Q|}\).  Choosing the cycles first to be shortest, and then, among those, so as to minimise the number of edges far from the heavy edge that are not already accounted for, we obtain \(|Q|\le 10k\), hence success probability \(2^{-11k}\).  Universal sets derandomize the algorithm.

Our second result answers (Q2), and does better than a polynomial kernel: the number of vertices is linear in \(k\).

\begin{restatable}{theorem}{kerneltheorem}\label{thm:linear_kernel}
    \textsc{Interval Metric Violation} admits a kernel with at most \(7k\) vertices, computable in \(\mathcal{O}(n^3)\) time.  For the standard \MVD problem the bound improves to \(6k\) vertices.
\end{restatable}

Section \ref{Section:kernel} starts from the triangles that already violate the triangle inequality.  Take a maximal family of edge-disjoint such triangles; every feasible support must contain an edge of each, so more than \(k\) of them is a NO-instance, and otherwise their at most \(3k\) edges and \(3k\) vertices form a mandatory core.  What remains is to bound the rest of the graph: a vertex outside these triangles may still form an initially bad triangle with one of their edges, and there may be \(\Theta(n)\) such \emph{witnesses}.  Keeping \(k+1\) witnesses per edge is a representative-set argument, and gives an \(\mathcal{O}(k^2)\)-vertex kernel.  To reach a linear bound we keep instead only what a matching needs: form the bipartite graph between outside vertices and the core edges they witness, take a maximum matching, and delete every unmatched outside vertex.

The rule is safe by an exchange argument along alternating paths, and this yields more than equivalence of the two instances: any solution \(S\) of the kernel lifts, in \(\mathcal{O}(n^2)\) time, to a solution \(S^+\) of the original instance with \(|S^+|\le|S|\).  The kernel is therefore \emph{solution-lifting}, and can be placed in front of any approximation algorithm.  Doing so before the \(\mathcal{O}(\log n)\)-approximation of \cite{cohenaddad2022fitting} gives, for the standard problem, an approximation factor of \(\mathcal{O}(\log \mathrm{OPT})\). This improvement is not merely cosmetic: in \cite{cohenaddad2022fitting} they exhibit instances with optimum \(1\) on which their algorithm has expected cost \(\Omega(\log n)\). Moreover, the kernel runs in \(O(n^3)\) time, matching their algorithm, so the preprocessing does not increase the asymptotic running time. 

Our third result shows that the answer to (Q1) is optimal: the dependence on \(k\) in Theorem \ref{thm:FPT} cannot be improved further.

\begin{restatable}{theorem}{ethlowerbound}\label{thm:complete-eth-lower}
Unless ETH fails, \MVD on complete graphs has no algorithm with running time \(2^{o(k)}n^{\mathcal{O}(1)}\), even when all edge weights belong to \(\{1,2,3\}\).
\end{restatable}

Section \ref{section:eth} obtains this from a parameter-preserving reduction from \Problem{Vertex Cover}, which adds a single apex vertex \(r\) to the input graph and gives weight \(1\) to every edge at \(r\), weight \(3\) to every edge of the graph, and weight \(2\) to every non-edge.  The bad triangles of the construction are then exactly the triangles \(ruv\) with \(uv\) an edge, so hitting them is covering.

The last two sections move from metrics to the structured distance functions of numerical taxonomy.  A distance function \(d\) on a finite set \(X\) is a \emph{tree metric} if there is a tree \(T\) with nonnegative edge lengths and \(X\subseteq V(T)\) such that \(d(x,y)\) is the length of the unique \(x,y\)-path of \(T\); equivalently, \(d\) is a metric satisfying the \emph{four-point condition}, that for all distinct \(a,b,c,e\) the maximum of
\[
    d(ab)+d(ce),\qquad d(ac)+d(be),\qquad d(ae)+d(bc)
\]
is attained at least twice.  It is an \emph{ultrametric} if moreover \(d(xz)\le\max\{d(xy),d(yz)\}\) for all \(x,y,z\), that is, if in every triangle the maximum length is attained at least twice; equivalently, if \(T\) may be taken rooted with all points of \(X\) at the same distance from the root.  Every ultrametric is a tree metric and every tree metric is a metric, so the three classes are nested.  \Problem{Ultrametric Violation Distance} and \textsc{Tree Metric Violation Distance} ask exactly the question of \MVD with ``metric'' replaced by the corresponding class.

Section~\ref{sec:uvd} observes that a standard order-preserving blow-up of the
input distances transfers Theorems~\ref{thm:FPT} and~\ref{thm:linear_kernel} to
\Problem{Ultrametric Violation Distance} with no loss
(Corollary~\ref{cor:uvd}). Finally, Section \ref{sec:tree-metric} answers (Q3).

\begin{restatable}{theorem}{treefpttheorem}\label{thm:tree-metric-fpt}
    \textsc{Tree Metric Violation Distance} can be solved in \(k^{\mathcal{O}(k)} n^{\mathcal{O}(1)}\) time.
\end{restatable}

\section{Preliminaries}

For a graph \(G\), an edge-length function \(d:E(G)\to \mathbb{R}_{\geq 0}\), and a path \(P\), the \emph{length} of \(P\) is \(d(P)=\sum_{e\in E(P)}d(e)\). We say that \(d\) is a \emph{metric on \(G\)} if every edge is no longer than any alternative path between its endpoints: for every edge \(uv\in E(G)\) and every \(u,v\)-path \(P\) in \(G-uv\), \(d(uv)\le d(P)\). On complete graphs this is exactly the usual triangle inequality.

We now restate the main problem studied in the paper. For the rest of the paper, \(G\) is the complete graph on \(n\) vertices.  

\begin{problembox}{Interval Metric Violation}
\textbf{Input:} A complete graph \(G\), values \(0 \le A_e\le M_e\le B_e\) for every
edge \(e\), and an integer \(k\).\\
\textbf{Parameter:} \(k\).\\
\textbf{Question:} Is there a set \(S\subseteq E(G)\) with \(|S|\le k\) and
final edge lengths \(d_e\) such that \(d_e=M_e\) for \(e\notin S\),
\(d_e\in[A_e,B_e]\) for \(e\in S\), and \(d\) is a metric?
\end{problembox}

When \(A_e=0\) and \(B_e\) is large enough for every edge, this corresponds to the standard \MVD problem, which is usually given by a weight function \(\delta:E(G)\to \mathbb{R}_{\geq 0}\).  Concretely, it suffices to take \(B_e = n\cdot\max_{f\in E(G)}\delta(f)\) for every edge. One useful feature of the standard version is that the problem reduces to hitting cycles that 
violate the metric inequality. More specifically, if \(\delta\) is a length function from \(E(G)\) to \(\mathbb{R}_{\geq 0}\), then 
a cycle \(C\) in \(G\) is said to be {\em bad} if there exists an edge \(e\) in \(C\) such that \(\delta(e)>\sum_{e'\in C\setminus e} \delta(e')\).
In that case the edge \(e\) is said to be the {\em heavy} edge of the bad cycle. 
A solution \(S\) to the standard problem is then exactly a set of edges hitting every bad cycle \cite{fan_et_al:LIPIcs.SWAT.2020.25}. Assuming \(G-S\) is connected, the final weight of an edge of the solution may be taken to be the shortest weighted distance between its endpoints in \(G-S\).
One way to interpret this is that, in the standard problem, the final weight of an edge in \(S\) can be pushed as high as possible without making that edge heavy. In the interval version, this may exceed \(B_e\). We therefore push only up to the upper endpoint \(B_e\). One consequence is that bad cycles of \(G-S\) depend not only on the edges left in \(G-S\), but also on the weights allowed for the edges of \(S\). The following definitions formalize this. 

\begin{definition}
\label{def:interval-support}
A \emph{support} is a set \(S\subseteq E(G)\) of edges that are allowed to be
changed.  For a fixed support \(S\), the admissible interval of an edge \(e\) is
\[
I_S(e)=
\begin{cases}
[A_e,B_e], & e\in S,\\
[M_e,M_e], & e\notin S.
\end{cases}
\]
Let \(a_S(e)\) and \(b_S(e)\) denote the lower and upper endpoints of
\(I_S(e)\).

A support \(S\) is \emph{feasible} if there is a metric \(d\) on \(G\) with
\(d_e\in I_S(e)\) for every edge \(e\); equivalently, if the lengths of the edges
of \(S\) can be reassigned within their intervals so that the result is a metric.
A \emph{solution} is a feasible support of size at most \(k\), so the question is
whether a solution exists.
\end{definition}

We can then define bad cycles relative to a support.  

\begin{definition}

We say that a cycle \(C\) is \emph{bad} for a support \(S\) if it has an edge \(h\) such that
\[
        a_S(h)>
        \sum_{f\in E(C)\setminus\{h\}} b_S(f).
\]
We call \(h\) the heavy edge of the bad cycle.  A support is \emph{good} if it has
no bad cycle.
\end{definition}

\begin{observation}\label{obs:support-monotone}
Let \(R\) and \(R'\) be two supports, and let \(C\) be a cycle. If \(R\cap E(C)\subseteq R'\cap E(C)\) and \(C\) is bad for \(R'\), then \(C\) is bad for \(R\). In particular, if \(R\subseteq R'\) and \(R\) is good, then \(R'\) is good.

Moreover, if \(R\cap E(C)=R'\cap E(C)\), then \(C\) is bad for \(R\) if and only if it is bad for \(R'\).
\end{observation}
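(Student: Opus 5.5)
The plan is to compare the two sides of the defining inequality edge by edge, using only the interval nesting $M_e \in [A_e, B_e]$. For any edge $e$, moving $e$ into the support can only widen its admissible interval: $I_R(e)=[M_e,M_e]\subseteq[A_e,B_e]=I_{R'}(e)$. So for every edge $e$ of $C$ we have
\[
a_{R'}(e)\le a_R(e)\qquad\text{and}\qquad b_R(e)\le b_{R'}(e).
\]
Here is why. If $e\in R$, the hypothesis $R\cap E(C)\subseteq R'\cap E(C)$ gives $e\in R'$, and both intervals equal $[A_e,B_e]$. If $e\notin R$, then $I_R(e)=[M_e,M_e]$. If also $e\notin R'$, the two intervals coincide. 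If $e\in R'$, then $A_e\le M_e\le B_e$ yields both inequalities.

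Now suppose $C$ is bad for $R'$ with heavy edge $h$. Chaining the pointwise inequalities gives
\[
a_R(h)\ \ge\ a_{R'}(h)\ >\ \sum_{f\in E(C)\setminus\{h\}} b_{R'}(f)\ \ge\ \sum_{f\in E(C)\setminus\{h\}} b_R(f).
\]
Hence $C$ is bad for $R$, with the same heavy edge $h$.

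For the \emph{in particular} part, take $R\subseteq R'$. Then $R\cap E(C)\subseteq R'\cap E(C)$ holds for every cycle $C$. If $R'$ had a bad cycle, it would also be bad for $R$, contradicting that $R$ is good; so $R'$ is good. For the \emph{moreover} part, equality $R\cap E(C)=R'\cap E(C)$ means each edge of $C$ lies in $R$ exactly when it lies in $R'$. Therefore $I_R(e)=I_{R'}(e)$ for every $e\in E(C)$, and the badness condition, which depends only on these intervals, is identical for the two supports. Alternatively, one can apply the first part in both directions.

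No step is a real obstacle. The only point that needs care is that badness of $C$ depends solely on the intervals of edges of $C$, which is exactly why the hypothesis needs only to be stated on $E(C)$ rather than on the whole supports.
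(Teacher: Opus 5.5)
Your proposal is correct and follows essentially the same argument as the paper: you compare intervals edge by edge on \(E(C)\), obtaining \(a_{R'}(e)\le a_R(e)\) and \(b_R(e)\le b_{R'}(e)\), and then chain these into the badness inequality. The paper does the same thing more tersely. Your explicit three-case check using \(A_e\le M_e\le B_e\) is a fine expansion of its one-line justification.
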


\begin{proof}
For every edge \(e\in E(C)\), the containment \(R\cap E(C)\subseteq R'\cap E(C)\) implies \(a_R(e)\ge a_{R'}(e)\) and \(b_R(e)\le b_{R'}(e)\), so any inequality showing that \(C\) is bad for \(R'\) also shows that it is bad for \(R\). If \(R\cap E(C)=R'\cap E(C)\), then the values \(a\) and \(b\) used in the definition of a bad cycle are the same for both supports.
\end{proof}

We let \(\dist_{b_S}(x,y)\) denote the shortest \((x,y)\) distance in the graph \(G\) when every edge \(e\) has length \(b_S(e)\). 
We say that a set \(S\) is \emph{feasible} if there exist final edge lengths \(d_e\) such that \(d_e=M_e\) for \(e\notin S\),
\(d_e\in[A_e,B_e]\) for \(e\in S\), and \(d\) is a metric. Note that if \(S\) is feasible, then so is \(S \cup \{e\}\) with the same weight function. 
We are now ready to show the following lemma, which is the interval analogue of the classical characterization described above.

\begin{lemma}\label{lem:good_feasible}
    A set \(S\) is feasible if and only if it is good. Moreover, the weight function \(d_{(xy)} :=  \dist_{b_S}(x,y)\) 
    is a metric such that  \(d_e=M_e\) for \(e\notin S\) and  \(d_e\in[A_e,B_e]\) for \(e\in S\). 
\end{lemma}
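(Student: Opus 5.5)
The plan is to prove the two directions separately, and to obtain the ``moreover'' part as a byproduct of the harder direction (good \(\Rightarrow\) feasible).

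\emph{Feasible \(\Rightarrow\) good.} Suppose \(d\) is a metric with \(d_e\in I_S(e)\) for every edge, and suppose some cycle \(C\) is bad for \(S\) with heavy edge \(h\). Since \(C-h\) is a path between the endpoints of \(h\), the metric property gives
\[
a_S(h)\le d(h)\le \sum_{f\in E(C)\setminus\{h\}} d(f)\le \sum_{f\in E(C)\setminus\{h\}} b_S(f),
\]
which contradicts badness. So \(S\) is good.

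\emph{Good \(\Rightarrow\) feasible.} Define \(d(xy):=\dist_{b_S}(x,y)\). Being a shortest-path distance, \(d\) satisfies the triangle inequality. By induction along any path, \(d(uv)\) is at most the \(d\)-length of every alternative \(u,v\)-path, so \(d\) is a metric on \(G\) in the sense of the preliminaries. It remains to show \(d(e)\in I_S(e)\) for each edge \(e=xy\).
\begin{itemize}
\item \emph{Upper bound.} The single edge \(e\) is itself an \(x,y\)-path, so \(d(e)\le b_S(e)\).
\item \emph{Lower bound.} Suppose \(d(e)<a_S(e)\), and let \(P\) be a shortest \(x,y\)-path with respect to \(b_S\). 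Then \(b_S(P)<a_S(e)\le b_S(e)\), so \(P\) is not the edge \(e\) itself. Since lengths are nonnegative we may take \(P\) to be a simple path, and \(P\) does not use \(e\): a simple \(x,y\)-path containing \(xy\) is just that edge. Hence \(P+e\) is a cycle whose edge \(e\) satisfies \(a_S(e)>\sum_{f\in P} b_S(f)\), i.e.\ a bad cycle, contradicting goodness.
\end{itemize}
Therefore \(a_S(e)\le d(e)\le b_S(e)\). For \(e\notin S\) both endpoints equal \(M_e\), which forces \(d_e=M_e\); for \(e\in S\) this gives \(d_e\in[A_e,B_e]\).

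The only step needing care is the lower bound: we must check that the shortest path together with \(e\) really forms a cycle. This uses nonnegativity to pass to a simple path and the fact, shown above, that the path is not \(e\) itself. One may also need a path of length \(\ge 2\) for it to count as a cycle, which holds since \(G\) is simple.
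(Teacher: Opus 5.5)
Your proof is correct and matches the paper's argument essentially step for step. The easy direction chains $a_S(h)\le d(h)\le\sum d(f)\le\sum b_S(f)$, and the converse uses the shortest-path metric $\dist_{b_S}$: the single edge gives the upper bound, and a shorter path together with the edge gives a bad cycle for the lower bound. Treating $e\in S$ and $e\notin S$ together via $a_S(e)\le d(e)\le b_S(e)$, and checking that the shortest path is simple and differs from $e$, only makes explicit what the paper leaves implicit.
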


\begin{proof}
    Suppose first that \(S\) is feasible with \(d\) the associated metric. Note that by 
    definition this means that \(a_S(e)\le d(e)\le b_S(e)\) for every edge. A bad cycle \(C\) with 
    heavy edge \(h\) would satisfy \(a_S(h) > \sum_{f\in E(C)\setminus\{h\}} b_S(f)\). Since \(a_S(h) \leq d(h)\) 
    and \(\sum_{f\in E(C)\setminus\{h\}} b_S(f) \geq \sum_{f\in E(C)\setminus\{h\}} d(f)\), we immediately reach a contradiction. 

    Assume now that \(S\) is good, i.e., has no bad cycle, and consider the edge length defined as \(d_{(xy)} :=  \dist_{b_S}(x,y)\). This length is indeed a metric 
    since it is the shortest path metric on the complete graph \(G\) with weight \(b_S(e)\) on every edge \(e\). Therefore, to conclude the proof we only have to 
    verify the conditions that \(d_e=M_e\) for \(e\notin S\) and  \(d_e\in[A_e,B_e]\) for \(e\in S\).

    If \(xy \not \in S\), we have that \(a_S(xy) = b_S(xy) = M_{xy}\). Moreover, if the edge \(xy\) is not the shortest path in the graph with weight
    \(b_S(e)\) on every edge, then we would immediately obtain a bad cycle with \(xy\) as the heavy edge, which is a contradiction. Therefore \(d_{(xy)} = M_{xy}\).

    Similarly, if \(xy \in S\), let \(P_{xy}\) be a path of shortest weight. By definition we have that \(d_{xy} = \sum_{f \in P_{xy}} b_S(f)\). Since \(xy\) is a path between \(x\) and \(y\), 
    we automatically have that  \(d_{xy} \leq B_{xy}\). Finally, if \(d_{xy} < a_S(xy)\), then \(xy\) and \(P_{xy}\) form a bad cycle for \(S\), 
    which is not possible. 
\end{proof}

\section{FPT algorithm}\label{section:FPT}

Our FPT algorithm will use random separation to find the set \(O\). For that purpose we need the notion of private bad cycle, which will be used to certify that an edge \(e\) must belong
to \(O\). More precisely, let \(O\), with \(|O| \leq k\), denote an inclusion-minimal feasible solution. Lemma \ref{lem:good_feasible} tells us that no cycle is bad for \(O\), and by minimality of \(O\),
we also have that \(O_e := O \setminus \{e\}\) must contain a bad cycle. By Observation \ref{obs:support-monotone}, this cycle must contain \(e\): otherwise \(O\) and \(O_e\) agree on all of its edges, and it would also be bad for \(O\). We say that such a cycle is a \emph{private bad cycle} for \(e\). For every \(e \in O\), let \(C_e\) denote an arbitrary private bad cycle with heavy edge \(h_e\), and let \(Q := \left( \bigcup _{e \in O} E(C_e) \right) \setminus O\). Before arguing about the size of \(Q\) which will be necessary for 
bounding the complexity of our algorithm, let us explain how a random separation can be used to recover the set \(O\). 

\begin{lemma}\label{lem:color}
    Let \(\chi\) denote a red-blue coloring of \(E(G)\) such that the set \(R_\chi\) of red edges contains \(O\) and the set \(B_\chi\) of blue 
    edges contains \(Q\). \(O\) is exactly the set of edges \(e\) such that \(R_\chi \setminus \{e\}\) is not good.  
\end{lemma}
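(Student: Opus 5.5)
We prove the two inclusions separately. Throughout, $O\subseteq R_\chi$ and $Q\subseteq B_\chi$, and $R_\chi$ is a superset of the good support $O$. By Observation~\ref{obs:support-monotone}, $R_\chi$ itself is good.

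\emph{Every $e\in O$ has $R_\chi\setminus\{e\}$ not good.} Fix $e\in O$ and its private bad cycle $C_e$, which is bad for $O_e=O\setminus\{e\}$. We compare $O_e$ with $R_\chi\setminus\{e\}$ on $E(C_e)$. Every edge of $C_e$ either lies in $O$ or lies in $Q\subseteq B_\chi$, so it is not in $R_\chi$.
\begin{itemize}
\item An edge of $C_e$ in $O\setminus\{e\}$ lies in both $O_e$ and $R_\chi\setminus\{e\}$.
\item The edge $e$ lies in neither.
\item An edge of $C_e$ outside $O$ lies in $Q$, hence is blue. It is then in neither set.
\end{itemize}
Thus $O_e\cap E(C_e)=(R_\chi\setminus\{e\})\cap E(C_e)$. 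The ``moreover'' part of Observation~\ref{obs:support-monotone} then says $C_e$ is bad for $R_\chi\setminus\{e\}$, so this support is not good.

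\emph{Every $e\notin O$ has $R_\chi\setminus\{e\}$ good.} If $e$ is blue, then $R_\chi\setminus\{e\}=R_\chi$, which is good. If $e$ is red but $e\notin O$, then $O\subseteq R_\chi\setminus\{e\}$. Since $O$ is good, monotonicity (Observation~\ref{obs:support-monotone}) shows $R_\chi\setminus\{e\}$ is good.

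The two inclusions together give the lemma. The only delicate point is the first inclusion: we must check that the coloring reproduces exactly the membership pattern of $O_e$ on $C_e$. This works because $C_e$ uses only edges of $O\cup Q$.
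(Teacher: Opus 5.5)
Your proof is correct and follows the paper's argument. For $e\in O$ you show that $R_\chi\setminus\{e\}$ agrees with $O_e$ on $E(C_e)$ and invoke the equality case of Observation~\ref{obs:support-monotone}, and for $e\notin O$ you use monotonicity via $O\subseteq R_\chi\setminus\{e\}$; your separate treatment of blue $e$ is harmless but unnecessary, since the same containment holds there too.
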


\begin{proof}
    If \(e \in O\) and the edges of \(E(C_e) \setminus O\) are in \(B_\chi\), then \(R_\chi \setminus \{e\}\) agrees with \(O_e\) on the edges of \(C_e\). Hence, by Observation \ref{obs:support-monotone}, \(C_e\) is a bad cycle in \(R_\chi \setminus \{e\}\). 
    Conversely, for any \(e \not \in O\),  \(R_\chi \setminus \{e\}\) contains \(O\), which is a good support for the instance. By Observation \ref{obs:support-monotone}, \(R_\chi \setminus \{e\}\) is good. 
\end{proof}

Therefore, in order for a random separation to allow us to find the solution, we only need to know its value on the elements of \(O\) and \(Q\), for some choice of private bad cycles. This means that, if we can show that \(Q\) has size linear in \(k\), then we immediately obtain the desired algorithm, as a uniform red-blue coloring has probability \(1/2^{|Q \cup O|}\) of satisfying the properties of Lemma \ref{lem:color}. This is the purpose of the next lemma. 

\begin{lemma}\label{lem:size_Q}
    There exists a choice of private bad cycles \(C_e\) over \(e \in O\) such that
    \[
        Q := \left( \bigcup _{e \in O} E(C_e) \right) \setminus O
    \]
    has size at most \(10k\).
\end{lemma}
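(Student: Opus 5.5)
The plan is to choose every private bad cycle \(C_e\) to be a \emph{shortest} one, meaning one with fewest edges among all cycles that are bad for \(O_e\). Note first that every cycle bad for \(O_e\) contains \(e\): otherwise it would also be bad for \(O\) by Observation~\ref{obs:support-monotone}, contradicting that \(O\) is good. Hence every bad cycle for \(O_e\) is a legitimate private cycle for \(e\), and shortening a bad cycle never leaves the class of private cycles.

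\textbf{Step 1 (chords lie in \(O\)).} Fix \(C=C_e\) with heavy edge \(h\), and let \(c\) be any chord of \(C\). The chord splits \(C\) into a cycle \(C_1\) through \(h\) and a cycle \(C_2\), where \(P_2=C_2-c\) is the path of \(C\) avoiding \(h\). Suppose \(c\notin O_e\), so that \(a(c)=b(c)=M_c\). If \(M_c>b(P_2)\), then \(C_2\) is bad with heavy edge \(c\). Otherwise \(C_1\) is bad with heavy edge \(h\), because replacing \(P_2\) by \(c\) can only lower the sum of the \(b\)-values. In either case we get a strictly shorter bad cycle for \(O_e\), contradicting the choice of \(C\). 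Hence all chords of \(C_e\) belong to \(O\), and so do all edges \(C_e\) spans apart from its own cycle edges outside \(O\). This alone bounds \(|V(C_e)|=O(\sqrt{k})\), giving \(|Q|=O(k^{3/2})\), so the real work is to aggregate over all \(e\).

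\textbf{Step 2 (split \(Q\)).} Call an edge of \(E(C_e)\setminus O\) \emph{near} if it shares an endpoint with \(h_e\), and \emph{far} otherwise. Each cycle has at most two near edges besides possibly \(h_e\) itself, and \(h_e\) is one further edge. This accounts for at most \(3k\) edges of \(Q\), and I would try to tighten this count to \(2k\).

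A far edge \(uv\in E(C_e)\setminus O\), with \(x\) an endpoint of \(h_e\), has \(xu\) and \(xv\) as chords of \(C_e\), unless one of them is a cycle edge adjacent to \(h_e\). So, apart from boundary cases absorbed into the near count, every far edge \(uv\) lies in a triangle \(xuv\) with \(xu,xv\in O\). We then refine the choice of cycles. Among all shortest private cycles, choose the family \(\{C_e\}\) to minimise the number of far edges not already lying in \(O\cup\bigcup_{e'} \{\text{near edges}\}\), or more precisely not already contributed to \(Q\) by another cycle.

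\textbf{Step 3 (charging, the main obstacle).} Charge each far edge \(uv\) of \(C_e\) to the \(O\)-edge \(xu\), where \(u\) is the endpoint of \(uv\) nearer to \(x\) along \(C_e\). The goal is that each \(O\)-edge receives at most a constant number of charges across all cycles, about \(7\), so that the far edges number at most \(7k\) and \(|Q|\le 3k+7k=10k\).

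The difficulty is that charges to one \(O\)-edge \(xu\) may come from many different cycles. I would handle this by an exchange argument. Suppose two cycles \(C_e\) and \(C_{e'}\) charge distinct far edges \(uv\) and \(uv'\) to the same \(xu\). Then one would reroute one cycle through the other's edge using the triangle inequalities available on \(O\)-chords and non-\(O\) edges, exactly as in Step~1. This should produce either a shorter bad cycle, contradicting shortness, or an equally short private cycle reusing an edge already in \(Q\), contradicting the minimality in Step~2.

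Making this exchange rigorous is where I expect the argument to be delicate. The rerouted cycle must still be bad for the correct support \(O_e\), which is fine by Step~1 since all such cycles contain \(e\). It must also remain shortest, and the tie-breaking potential must strictly decrease. I would first try it when \(u,v,v'\) are pairwise distinct and then treat the degenerate coincidences separately.
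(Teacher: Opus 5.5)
Your Steps 1 and 2 are sound and match the paper's starting point. You take each \(C_e\) shortest among the bad cycles of \(O_e\), observe that every chord of \(C_e\) lies in \(O_e\), and pay \(O(1)\) per cycle for the edges around \(h_e\). The genuine gap is Step 3, which carries the whole lemma; as you set it up, there is no mechanism behind it.

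\begin{itemize}
\item The triangle \(xuv\), with \(x\) an endpoint of \(h_e\) and \(xu,xv\in O\), gives no weight comparison. The value \(b_{O_e}(xu)=B_{xu}\) is unrelated to \(b(uv)\), and in the standard problem it is huge. Also, \(x\) already lies on \(C_e\), so \(u\,x\,v\) cannot stand in for \(uv\).
\item ``Rerouting \(C_e\) through \(uv'\)'' does not produce a closed walk through \(h_e\) unless \(v'\in V(C_e)\). Nothing bounds the weight of whatever piece of \(C_{e'}\) you would import.
\item Your tie-breaking potential counts far edges ``not already contributed to \(Q\) by another cycle''. This couples the choices for different \(e\), so a local swap in \(C_e\) can change the status of edges in other cycles and need not decrease the potential.
\item A constant load on each \(O\)-edge under your specific charging rule is strictly stronger than the lemma, and you give no reason it should hold.
\end{itemize}

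The missing idea is a \emph{global} replacement set, fixed before the cycles are chosen. It is indexed by the \(O\)-chord spanning two \emph{consecutive} far edges, not by an edge to an endpoint of \(h_e\). For each \(xy\in O\), let \(\alpha_1(xy),\alpha_2(xy)\) be the two \(b_O\)-lightest two-edge \(x\)--\(y\) paths with distinct middle vertices. Let \(H\) be the union of their edges, so \(|H|\le 4k\).

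Now take consecutive edges \(f,g\) of \(C_e\), neither equal to nor adjacent to \(h_e\), with \(f,g\neq e\). The chord \(xy\) joining the ends of \(fg\) lies in \(O_e\) by your Step 1. Whichever of \(\alpha_1(xy),\alpha_2(xy)\) avoids \(h_e\) has \(b_{O_e}\)-weight at most \(b_O(f)+b_O(g)=b_{O_e}(f)+b_{O_e}(g)\). Substituting it and shortcutting closed subwalks gives a bad cycle of \(O_e\), hence a private one. It is either strictly shorter than \(C_e\), or of equal length with strictly fewer far edges outside \(H\cup O\).

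Because \(H\) does not depend on the cycles, this secondary minimisation is done independently for each \(e\), avoiding the coupling above. The exception \(g=e\) is unavoidable, since \(b_{O_e}(e)=M_e\) may be smaller than \(b_O(e)=B_e\); it costs at most two edges per cycle. Together with at most four edges around each \(h_e\), this gives \(|Q|\le 4k+|H|+2k\le 10k\).
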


\begin{proof}
    Let us start with a choice of \(C_e\) for each \(e \in O\) which is minimal, i.e., one that uses the fewest edges among all the bad cycles of \(O_e\). 

    \begin{claim}\label{clm:chord}
            Any chord \(q\) of \(C_e\) not adjacent to \(h_e\) belongs to \(O_e\).
    \end{claim}

    \begin{proof}
    Indeed, suppose that \(q \not \in O_e\) and let \(P_2\) be the path between the endpoints of \(q\) that does not contain 
    \(h_e\). By minimality of \(C_e\), we have that \(q \cup P_2\) is not a bad cycle; in particular, this means that \(a_{O_e}(q) \leq \sum_{f \in P_2} b_{O_e}(f)\). However, since \(q \not \in O_e\), we have that \(a_{O_e}(q) = b_{O_e}(q)\)
    and we can therefore replace \(P_2\) with \(q\) and still obtain a bad cycle. This contradicts the minimality of \(C_e\). 
\end{proof}

We will now partition each cycle \(C_e\) into two subpaths \(P_{e,1}\) and \(P_{e,2}\) such that: 
\begin{itemize}
    \item \(P_{e,1}\) is a path or a cycle of length at most \(4\) containing \(h_e\)
    \item \(P_{e,2}\) is a path of length either \(0\) or at least \(2\) whose endpoints are not adjacent to \(h_e\). 
\end{itemize}
To do so, let \(P_e\) be the path \(C_e - h_e\) between the endpoints of \(h_e\). If \(P_e\) has length at most \(3\), then we set \(P_{e,1}\) as the entire cycle \(C_e\) and \(P_{e,2}\) as the empty path. Otherwise, we let \(P_{e,1}\) consist of \(h_e\) and the two adjacent edges on \(C_e\), and \(P_{e,2}\) be the rest of the cycle. 

If \(|V(G)|\le 3\), then the proof is immediate, so assume \(|V(G)|\ge 4\). For every \(xy \in O\), let \(\alpha_1(xy)\) and  \(\alpha_2(xy)\) denote the two lightest length-two \(x,y\)-paths for the edge weights \(b_O\), taken with distinct internal vertices, and let \(H := \bigcup_{xy \in O}\bigcup_{i \in [2]}E(\alpha_i(xy))\). Note that for a fixed pair \((x,y)\) these two paths are edge disjoint because the intermediate vertices are different. In particular, at most one of them contains any fixed edge; we will use this to avoid \(h_e\). 
The following claim is the main part of the proof. It shows that, after choosing the private cycles carefully, every edge of the part of \(C_e\) far from the heavy edge is either already in the global replacement set \(H\), belongs to \(O\), or lies next to \(e\). The proof is a shortcut argument: if two consecutive edges \(fg\) violate this, then the chord joining the endpoints of \(fg\) must belong to \(O\), and one of the two cheapest two-edge \(x,y\)-paths gives an equally short private bad cycle with fewer offending edges.

\begin{claim}\label{clm:far-edges}
    Each \(C_e\) can be chosen so that every edge of \(P_{e,2} \setminus (H \cup O)\) is adjacent to \(e\) on \(P_{e,2}\).  (If \(e\notin E(P_{e,2})\), this asserts that \(P_{e,2} \setminus (H \cup O)=\emptyset\).)
\end{claim}

\begin{proof} 
    Indeed, fix \(e \in O\) and choose \(C_e\), among the private bad cycles with the minimum number of edges, to minimize the number of edges of \(P_{e,2} \setminus (H \cup O)\) that are not adjacent to \(e\) on \(P_{e,2}\). Assume for contradiction that such an edge \(f\) exists. Let \(g\) be an edge adjacent to \(f\) on \(P_{e,2}\) with \(g \neq e\), which exists by the choice of \(f\), and let \(x,y\) denote the endpoints of the path \(fg\). The edge \(xy\) is a chord of \(C_e\), and it is not adjacent to \(h_e\) because \(fg\) lies on \(P_{e,2}\). By Claim \ref{clm:chord}, \(xy \in O_e \subseteq O\), so let us consider \(\alpha_1(xy)\) and \(\alpha_2(xy)\). Let \(\beta\) be \(\alpha_1(xy)\) if it does not contain \(h_e\), and \(\alpha_2(xy)\) otherwise.
 
    The path \(\beta\) does not contain \(h_e\). Moreover, since the path \(fg\) also avoids \(h_e\), the choice of \(\alpha_1(xy)\) and \(\alpha_2(xy)\) gives \(b_O(\beta) \leq b_O(f)+b_O(g)\): this is immediate if \(\beta=\alpha_1(xy)\), and if \(\beta=\alpha_2(xy)\) then \(\alpha_1(xy)\) contains \(h_e\) while \(fg\) does not, so \(fg\) is a length-\(2\) path distinct from \(\alpha_1(xy)\). Since \(f,g \neq e\), we have \(b_O(f)=b_{O_e}(f)\) and \(b_O(g)=b_{O_e}(g)\); also \(b_{O_e}(\beta) \leq b_O(\beta)\). Replacing \(fg\) by \(\beta\) therefore gives a walk \(W\) between the endpoints of \(h_e\) with \(b_{O_e}(W) < a_{O_e}(h_e)\). The walk \(W\) has the same number of edges as \(C_e-h_e\). Removing closed subwalks from \(W\) gives a simple path between the endpoints of \(h_e\) of no larger \(b_{O_e}\) weight. This simple path together with \(h_e\) gives another bad cycle of \(O_e\) whose length is at most \(|C_e|\). This cycle must contain \(e\), since otherwise \(O \) and \(O_e\) would agree on all its edges and it would be a bad cycle of \(O\) by Observation \ref{obs:support-monotone}.  
     
    Finally, if the extracted cycle is shorter than \(C_e\), this contradicts the choice of \(C_e\). On the other hand, if the extracted cycle has the same length as \(C_e\), then no closed subwalk was removed. Thus the replacement really removed \(f\) and inserted only edges of \(H\), contradicting the secondary choice of \(C_e\). Indeed, replacing \(P_{e,2}\) by the path \(W\) will still give a bad cycle with heavy edge \(h_e\) for \(O_e\), and by the properties of \(O_e\), this must be a private bad cycle of \(e\).
\end{proof}
It remains to count the edges of \(Q\). Indeed, as \(O\) is a set of size \(k\), \(H\) contains at most \(4k\) edges, and the union of the \(P_{e,1}\) has size at most \(4k\). Moreover, for each \(e \in O\), at most two edges of \(P_{e,2} \setminus (H \cup O)\) are adjacent to \(e\) on \(P_{e,2}\): an edge of a path has at most two neighbours on it, and if \(e\notin E(P_{e,2})\) there are none. Thus \(Q\) has size at most \(4k+4k+2k=10k\), which concludes the proof.
\end{proof}

We are now ready to prove Theorem \ref{thm:FPT}.

\fpttheorem*

\begin{proof}[Proof of Theorem \ref{thm:FPT}]
The algorithm picks a random coloring \(\chi\) of the edges into red or blue uniformly, and lets \(R_\chi\) be the set of red edges. If \(R_\chi\) is not good, it rejects the coloring. Otherwise, it computes
\[
    F_\chi := \{e \in E(G) \mid R_\chi \setminus \{e\} \text{ is not good}\}.
\]
If \(|F_\chi|\leq k\) and \(F_\chi\) is good, then it accepts and returns \(F_\chi\); otherwise, it rejects the coloring. Each goodness test can be done in polynomial time: a support \(T\) is not good if and only if, for some edge \(h=xy\), the shortest \(x,y\) path in \(G-h\) with edge weights \(b_T\) has length smaller than \(a_T(h)\).

The algorithm is sound by Lemma \ref{lem:good_feasible}: whenever it accepts, \(F_\chi\) is a good support of size at most \(k\), and hence is feasible. For completeness, suppose that the instance is a YES instance, and let \(O\) be an inclusion minimal feasible solution. If \(\chi\) colors the edges of \(O\) red and the edges of \(Q\) blue, then Lemma \ref{lem:color} gives \(F_\chi=O\), so the algorithm accepts. By Lemma \ref{lem:size_Q}, this event has probability at least \(1/2^{|Q| + |O|}\geq 1/2^{11k}\). Repeating the process \(2^{\mathcal{O}(k)}\) times gives constant success probability on every YES instance, and the standard universal-sets derandomization gives a deterministic \(2^{\mathcal{O}(k)}\cdot n^{\mathcal{O}(1)}\) algorithm.
\end{proof}

\section{Kernelization}\label{Section:kernel}

For a vertex set \(Y\), let \(\mathcal I[Y]\) denote the instance induced by \(Y\), with all intervals inherited from the original instance and with the same parameter \(k\). We first give a representative \(\mathcal{O}(k^2)\)-kernel as a warm-up, and then explain how to replace the representatives by a matching argument to obtain a linear kernel.

Call a triangle \emph{initially bad} if it is bad for the empty support. Let \(\mathcal T\) be a maximal family of edge-disjoint initially bad triangles. If \(|\mathcal T|>k\), then we return a fixed NO-instance, since every feasible support must contain one edge from each triangle of \(\mathcal T\). Otherwise, let
\[
        F:=\bigcup_{T\in\mathcal T}E(T),
        \qquad
        U:=V(F).
\]
Then \(|F|\le 3k\), \(|U|\le 3k\), and by maximality every initially bad triangle contains an edge of \(F\).  We call the edges of \(F\) the \emph{core edges}, the set \(U\) the \emph{core}, and a vertex of \(V(G)\setminus U\) an \emph{outside vertex}.

\begin{claim}\label{clm:free-vertex-shortening}
Let \(Y\subseteq V(G)\), let \(R\subseteq E(G[Y])\), and let \(C\) be a shortest cycle in \(G[Y]\) that is bad for \(R\). Then either \(C\) is a triangle, or every vertex of \(C\) is incident with some edge of \(R\).
\end{claim}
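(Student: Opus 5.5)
The plan is a shortcut argument, in the spirit of Claim \ref{clm:chord}. Let \(C\) be a shortest cycle in \(G[Y]\) that is bad for \(R\), with heavy edge \(h\). Suppose \(C\) is not a triangle, so \(|C|\ge 4\), and suppose some vertex \(v\) of \(C\) is incident with no edge of \(R\). We will build a strictly shorter cycle in \(G[Y]\) that is bad for \(R\), contradicting the choice of \(C\). The key point is that every edge at \(v\) lies outside \(R\), so it has a fixed length \(a_R=b_R=M\). Such edges behave like ordinary metric edges and can be used to shortcut.

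First suppose \(v\) is not an endpoint of \(h\). Let \(u,w\) be the neighbours of \(v\) on \(C\). Then \(uw\) is a chord of \(C\) in \(G[Y]\), since \(|C|\ge 4\). Consider the triangle \(uvw\).
\begin{itemize}
\item If \(b_R(uw)\le b_R(uv)+b_R(vw)\), replace the path \(u v w\) by the edge \(uw\). The sum of \(b_R\) over the non-heavy edges does not increase, so this gives a bad cycle with heavy edge \(h\) and one fewer edge. (If \(uw=h\), then \(C\) would be a triangle, which is excluded.)
\item Otherwise \(b_R(uw)> b_R(uv)+b_R(vw)\). Since \(uv,vw\notin R\), the values \(b_R\) and \(a_R\) coincide on these two edges. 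The quantity \(a_R(uw)\) may be smaller than \(b_R(uw)\) if \(uw\in R\), so this case must be handled by a different shortcut.
\end{itemize}

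The cleaner route is to shortcut across the heavy side instead. Let \(h=xy\), and split \(C-h\) at \(v\) into a path \(P_1\) from \(x\) to \(v\) and a path \(P_2\) from \(v\) to \(y\). Since \(|C|\ge 4\), at least one of \(P_1,P_2\) has two or more edges; say \(P_1\) does, so \(xv\) is a chord not in \(C\), and \(xv\notin R\).
\begin{itemize}
\item If \(a_R(xv)=M_{xv}>b_R(P_1)\), then \(xv\cup P_1\) is a bad cycle with heavy edge \(xv\), and it is shorter than \(C\).
\item Otherwise \(b_R(xv)=M_{xv}\le b_R(P_1)\). Then \(h\), \(xv\) and \(P_2\) form a cycle with
\[
b_R(xv)+b_R(P_2)\le b_R(P_1)+b_R(P_2)<a_R(h),
\]
so it is bad with heavy edge \(h\). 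It is shorter than \(C\) because \(P_1\) has at least two edges.
\end{itemize}
Either way we contradict minimality. This argument also covers the case where the other neighbour of \(v\) on \(C\) is an endpoint of \(h\).

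It remains to treat the case where \(v\) is an endpoint of \(h\), say \(v=x\). This is impossible, because then \(h\) is incident with \(v\), and \(h\notin R\) would have to hold. That is not a contradiction in itself, so the case needs an argument. Let \(u\) be the neighbour of \(x\) on \(C-h\); the remaining part of \(C-h\) is a path \(P'\) from \(u\) to \(y\) with at least two edges. The chord \(uy\) is handled the same way, using that \(xu\) and \(xy\) lie outside \(R\).
\begin{itemize}
\item If \(b_R(uy)\le b_R(P')\), then the triangle \(x u y\) with heavy edge \(h\) is bad, since \(a_R(h)>b_R(xu)+b_R(P')\ge b_R(xu)+b_R(uy)\).
\item Otherwise \(b_R(uy)>b_R(P')\). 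If \(uy\notin R\), then \(a_R(uy)=b_R(uy)\), so \(uy\cup P'\) is a shorter bad cycle.
\item If \(uy\in R\), use the chord \(xw\) instead, where \(w\) is the vertex of \(P'\) adjacent to \(u\). This chord avoids \(R\) because it is incident to \(x\), and applying the previous paragraph's argument with the split at \(w\) gives a shorter bad cycle.
\end{itemize}

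The main obstacle is this asymmetry: \(a_R\) and \(b_R\) differ on edges of \(R\). The fix is always to choose a chord incident with the \(R\)-free vertex \(v\), so that its lower and upper values agree, exactly as in Claim \ref{clm:chord}.
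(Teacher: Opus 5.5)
Your final argument is correct and is essentially the paper's. You take a chord at the \(R\)-free vertex, so its lower and upper values coincide; then either that chord is heavy in the cycle it forms with the side of \(C\) avoiding \(h\), or it shortcuts that side in a smaller cycle through \(h\). The paper just takes an arbitrary chord \(zg\) and places \(h\) on one of the two \(z,g\)-paths, which makes your case split on whether \(v\) is an endpoint of \(h\), the abandoned \(uw\) attempt, and the \(uy\) detour unnecessary.
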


\begin{proof}
Suppose that \(C\) is not a triangle, and let \(z\in V(C)\) be a vertex incident with no edge of \(R\). Let \(zg\) be a chord from \(z\) to a nonconsecutive vertex of \(C\). This chord is not in \(R\), so \(a_R(zg)=b_R(zg)=M_{zg}\). Let \(P_1,P_2\) be the two \(z,g\)-paths on \(C\), and assume that a heavy edge \(h\) of \(C\) lies on \(P_1\). If \(b_R(zg)\le b_R(P_2)\), then \(P_1\cup\{zg\}\) is a shorter bad cycle with heavy edge \(h\). If \(b_R(zg)>b_R(P_2)\), then \(P_2\cup\{zg\}\) is a shorter bad cycle with heavy edge \(zg\). Both cases contradict the choice of \(C\).
\end{proof}

For a core edge \(f=ab\in F\) and a vertex \(z\notin U\), we say that \(z\) is a \emph{witness} for \(f\) if \(abz\) is initially bad. Such a witness is \emph{\(f\)-dominated} if \(abz\) is not bad for the support \(\{f\}\). An outside vertex \(z\notin U\) is \emph{non-dominated} if it is a witness for some \(f\in F\), but is not \(f\)-dominated.

\begin{claim}\label{clm:nondominated}
If there are more than \(k\) non-dominated outside vertices, then the instance is a NO-instance.
\end{claim}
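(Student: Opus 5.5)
We argue that every feasible support must contain, for each non-dominated outside vertex \(z\), at least one edge incident with \(z\); since \(z\notin U\), such edges are private to \(z\) in the sense that no edge is incident with two distinct outside vertices unless it joins them. To get disjointness, we instead show that every feasible support \(S\) must contain an edge of the triangle \(abz\) other than \(f=ab\), i.e., one of \(az\) or \(bz\). These edges are distinct for distinct \(z\), because each has exactly one endpoint outside \(U\) and that endpoint is \(z\). Hence more than \(k\) non-dominated outside vertices force \(|S|>k\).

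The key step is the following. Fix a non-dominated outside vertex \(z\), witnessing \(f=ab\) without being \(f\)-dominated, so the triangle \(abz\) is bad for the support \(\{f\}\). Let \(S\) be any feasible support. Then \(S\) is good by Lemma \ref{lem:good_feasible}, so \(abz\) is not bad for \(S\).

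Suppose for contradiction that \(az,bz\notin S\). Then \(S\cap E(abz)\subseteq\{f\}=\{f\}\cap E(abz)\). By Observation \ref{obs:support-monotone}, applied with \(R=S\) and \(R'=\{f\}\), the badness of \(abz\) for \(\{f\}\) transfers to \(S\). This contradicts the goodness of \(S\). Therefore \(S\) contains \(az\) or \(bz\).

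Picking one such edge for each non-dominated outside vertex gives an injective map into \(S\), which concludes the proof. There is no real obstacle here. The only point needing care is the injectivity: an edge \(az\) with \(a\in U\) and \(z\notin U\) determines \(z\) uniquely. Note that both chosen edges have \(a,b\in U\), since \(f\in F\) gives \(a,b\in U\).
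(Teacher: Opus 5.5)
Your proof is correct and follows the paper's route. A feasible (hence good) support that avoids both side edges \(az,bz\) restricts to \(\emptyset\) or \(\{f\}\) on the triangle \(abz\), making it bad; the forced side edges are distinct because each one determines its unique endpoint outside \(U\). Your single application of Observation~\ref{obs:support-monotone} with \(R'=\{f\}\) covers both restriction cases at once, where the paper treats them separately.
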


\begin{proof}
Let \(z\notin U\) be non-dominated, and let \(f=ab\in F\) witness this. Then \(abz\) is initially bad and remains bad for the support \(\{f\}\). Hence every feasible support \(S\) must contain one of the two side edges \(az,bz\): if neither side edge is in \(S\), then the restriction of \(S\) to the triangle \(abz\) is either empty or \(\{f\}\), and in both cases the triangle is bad. The side edges forced by different outside vertices are distinct, because each has one endpoint equal to its outside vertex and the other endpoint in \(U\). Thus more than \(k\) non-dominated outside vertices cannot be covered by a support of size at most \(k\).
\end{proof} 

Let \(\mathcal B\) be the set of non-dominated outside vertices. By Claim \ref{clm:nondominated}, either we already returned NO or \(|\mathcal B|\le k\).  For the standard problem this set is in fact empty.

\begin{proposition}\label{prop:representative_kernel}
    \textsc{Interval Metric Violation} admits a kernel with \(\mathcal{O}(k^2)\) vertices.
\end{proposition}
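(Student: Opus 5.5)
We plan to build the kernel as \(\mathcal I[Y]\) for a vertex set \(Y\) of size \(\mathcal{O}(k^2)\), defined as follows. Start with \(Y:=U\cup\mathcal B\); this has at most \(4k\) vertices by Claim~\ref{clm:nondominated}, and otherwise we return NO. Then, for every core edge \(f\in F\), add up to \(k+1\) witnesses for \(f\), chosen arbitrarily, or all of them if there are fewer. This gives \(|Y|\le 4k+3k(k+1)=\mathcal{O}(k^2)\). One direction of safeness is immediate: if \(S\) is feasible for the original instance, then \(S\cap E(G[Y])\) is feasible for \(\mathcal I[Y]\). Indeed, the restriction of a metric to an induced subgraph is a metric, so any bad cycle of \(G[Y]\) would already be bad in \(G\).

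For the converse, let \(S\) be a solution of \(\mathcal I[Y]\). By Observation~\ref{obs:support-monotone} we may enlarge \(S\) freely, so we first normalise it. For each \(f\in F\) with more than \(k\) witnesses, we argue that \(f\in S\) can be assumed. Suppose \(f\notin S\). All \(k+1\) kept witnesses \(z\) are dominated, so each triangle \(abz\) is bad for the support \(\emptyset\) on its edges. Hence \(S\) must contain one of \(az,bz\) for each of them, and these \(k+1\) edges are distinct, which contradicts \(|S|\le k\). So every core edge with many witnesses lies in \(S\). The same count shows that every core edge \(f\notin S\) has all of its witnesses inside \(Y\).

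Now we show that \(S\) is good in \(G\), using Lemma~\ref{lem:good_feasible}. Suppose it is not, and take a shortest cycle \(C\) of \(G\) that is bad for \(S\). Apply Claim~\ref{clm:free-vertex-shortening} with \(Y=V(G)\) and \(R=S\). Either \(C\) is a triangle, or every vertex of \(C\) is incident with an edge of \(S\subseteq E(G[Y])\). In the second case \(C\subseteq G[Y]\), which contradicts the goodness of \(S\) in \(\mathcal I[Y]\). So \(C\) is a triangle with a vertex \(z\notin Y\), and its other edges are not in \(S\). Hence \(C\) is bad for the empty support, that is, initially bad, so it contains a core edge \(f=ab\), and \(z\) is an outside witness of \(f\) (if two vertices lay outside \(Y\), \(C\) would contain no edge of \(F\), since \(U\subseteq Y\)). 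Since \(z\notin Y\), \(f\) has more than \(k+1\) witnesses, so \(f\in S\) by the normalisation. On the other hand \(z\notin\mathcal B\), so \(z\) is \(f\)-dominated, meaning \(abz\) is not bad for \(\{f\}\). But \(S\cap E(abz)=\{f\}\), so by Observation~\ref{obs:support-monotone} the triangle is not bad for \(S\). This is a contradiction.

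The main obstacle is the normalisation step. Adding \(f\) to \(S\) can push the size to \(k+1\), so it cannot be done blindly. Instead we use the forced-side-edge counting argument above: it shows directly that \(f\in S\) is forced whenever \(f\) has more than \(k\) kept witnesses, so no enlargement is needed and the size bound \(|S|\le k\) is preserved. The remaining care is in the application of Claim~\ref{clm:free-vertex-shortening}: we must check that a shortest bad cycle in \(G\) that is not a triangle has all its vertices in \(Y\), because all edges of \(S\) lie inside \(Y\).
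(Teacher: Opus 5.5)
Your proof is correct and follows the paper's argument. It uses the same kept set \(U\cup\mathcal B\) plus \(k+1\) witnesses per core edge, the same reduction via Claim~\ref{clm:free-vertex-shortening} to a triangle \(abz\) with \(z\notin Y\), and the same pigeonhole-plus-domination contradiction. The only difference is ordering: you run the pigeonhole step up front to force \(f\in S\), whereas the paper uses it at the end to find an untouched kept witness \(z'\) when \(f\notin S\).

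One harmless slip: the forced side-edge count does not need the kept witnesses to be dominated, and with arbitrarily chosen witnesses they need not be. It only uses that \(abz\) is initially bad and \(f\notin S\).
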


\begin{proof}
For each core edge \(f\in F\), keep \(k+1\) arbitrary \(f\)-dominated witnesses, or all of them if there are fewer than \(k+1\). Let \(X\) be \(U\cup \mathcal B\) together with all kept witnesses. Since \(|F|\le 3k\) and \(|\mathcal B|\le k\), we have \(|X|=\mathcal{O}(k^2)\).

If \(O\) is a good support in the original instance, then \(O\cap E(G[X])\) is good in \(\mathcal I[X]\). Conversely, let \(S\subseteq E(G[X])\), \(|S|\le k\), be good in \(\mathcal I[X]\). Suppose that \(S\) is not good in the original instance, and take a shortest cycle \(C\) that is bad for \(S\). Since \(S\) is good on \(X\), the cycle contains a vertex \(z\notin X\). No edge incident with \(z\) belongs to \(S\), so Claim \ref{clm:free-vertex-shortening} implies that \(C\) is a triangle, say \(xyz\).

Since \(C\) is bad for \(S\) and \(z\not \in \mathcal{B} \), Observation \ref{obs:support-monotone} shows that it is initially bad. By maximality of \(\mathcal T\), it contains an edge of \(F\); since \(z\notin X\) and \(U\subseteq X\), this edge must be the edge \(xy\) opposite \(z\). Thus \(z\) is a witness for \(xy\). Moreover \(z\notin \mathcal{B}\), so \(z\) is \(xy\)-dominated. If \(xy\in S\), then the restriction of \(S\) to the triangle \(xyz\) is exactly \(\{xy\}\), so \(xyz\) is not bad, a contradiction. Hence \(xy\notin S\).

Because \(z\notin X\), not all \(xy\)-dominated witnesses were kept, so \(X\) contains \(k+1\) kept \(xy\)-dominated witnesses. Therefore, one such witness, say \(z'\), is such that neither \(xz'\in S\) nor \(yz'\in S\). Then \(xyz'\) is initially bad, and none of its three edges belongs to \(S\). Hence \(xyz'\) is bad for \(S\) inside \(X\), contradicting that \(S\) is good in \(\mathcal I[X]\). Therefore \(S\) is good globally.
\end{proof}

The linear kernel keeps the same mandatory set \(U\cup \mathcal B\), but avoids storing \(k+1\) dominated witnesses for every core edge. Instead, we keep only the vertices that are still needed by the following matching argument. Set
\[
        W:=V(G)\setminus (U\cup \mathcal B),
\]
and build the bipartite graph
\[
        H=(W,F;E_H),
\]
where \(z\in W\) is adjacent to \(f\in F\) if \(z\) is a witness for \(f\). By definition of \(\mathcal B\), every witness \(z\in W\) for a core edge \(f\) is \(f\)-dominated.
 
We now apply the following reduction rule: compute a maximum matching \( M \) in \( H \) and let \( W_M \) be the set of vertices of \( W \) saturated by \( M \). The kernel simply removes \( R := W \setminus W_M \) from the instance.

\begin{claim}\label{clm:reduction_safe}
        The instance \(\mathcal I[ U \cup \mathcal B \cup W_M]\) is equivalent to the original instance. In fact any solution to  \(\mathcal I[ U \cup \mathcal B \cup W_M]\) can be lifted, in polynomial time, to a solution of the original instance of the same size. 
\end{claim}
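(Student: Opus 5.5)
The forward direction is immediate. If \(O\) is good in the original instance, then \(O\cap E(G[X])\) is good in \(\mathcal I[X]\), where \(X:=U\cup\mathcal B\cup W_M\). Indeed, every cycle of \(G[X]\) sees the same restriction of \(O\), so Observation~\ref{obs:support-monotone} applies. The work is the lifting. Let \(S\subseteq E(G[X])\) be good in \(\mathcal I[X]\) with \(|S|\le k\). I would construct \(S^+\) by an exchange along \(M\)-alternating paths in \(H\), so that \(|S^+|\le |S|\).

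First I would pin down the bad cycles that can remain. Suppose \(S\) is not good globally, and take a shortest bad cycle \(C\) for \(S\). Exactly as in the proof of Proposition~\ref{prop:representative_kernel}, \(C\) must be a triangle \(xyz\) with \(z\in R\) and \(xy=f\in F\setminus S\). The argument uses Claim~\ref{clm:free-vertex-shortening}, maximality of \(\mathcal T\), and \(f\)-domination. Moreover, \(xyz\) is bad for any support avoiding its edges incident to \(z\). So the defect of \(S\) is measured by the set \(D\) of core edges \(f\notin S\) having an unmatched witness \(z\in R\).

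Next I would use the structure of maximum matchings. Since \(M\) is maximum, no \(M\)-augmenting path starts at an unmatched \(z\in R\). Hence every \(f\in D\) is matched, to some \(w\in W_M\), and the triangle \(f w\) is initially bad. Consider the alternating paths from \(f\): \(f\to w\to f'\to w'\to\cdots\), where \(w\) is a witness of \(f'\) and \(f'\) is matched to \(w'\). Since \(S\) is good on \(X\), each triangle \(f w\) with \(f\notin S\) is hit by \(S\) through a side edge incident to \(w\). The plan is to transform \(S\) so that every core edge reachable by alternating paths from an unmatched \(R\)-vertex lies in \(S^+\). For each such matched pair \((f,w)\) with \(f\notin S\), I would replace one side edge of \(w\) in \(S\) by \(f\) itself. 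Domination makes this sufficient: once \(f\in S^+\), the triangle \(fz\) is not bad for any support whose restriction to it is \(\{f\}\). This handles every witness \(z\) of \(f\), including those in \(R\).

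The remaining step is to check that the new support is still good. The removed side edge \(wa\) may have been protecting a different triangle \(f'w\) where \(w\) witnesses \(f'\). I would show that \(f'\) is then also reachable by an alternating path, so it too is added. By the König/Hall structure, the reachable core edges are matched injectively to distinct vertices in \(W_M\), each contributing at least one side edge of \(S\) to pay for it; this gives \(|S^+|\le|S|\). For triangles touching only \(X\), and for longer cycles, I would argue via Claim~\ref{clm:free-vertex-shortening} again. Adding core edges only helps by monotonicity. Removed edges incident to \(w\) could create shortest bad cycles that pass through \(w\) with \(w\) then free, and those reduce to triangles at \(w\), which are initially bad with a core edge opposite. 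If the exchange creates a bad cycle that is not a triangle, this would be the main obstacle. The alternative plan would then be to keep the side edges and instead add \(f\) only when the corresponding \(w\)'s side edges are all removable, paying for \(f\) with them. Finally, everything is computed by BFS in \(H\) and one goodness test, so the lifting runs in polynomial time.
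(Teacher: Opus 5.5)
Your construction is in fact sound, but you never prove that the exchanged support is good, and that is the step that carries the claim. Your argument for it breaks exactly where you suspect it might. You delete only one side edge at each matched witness $w=m(f)$, so $w$ may still be incident with other edges of your new support. Claim~\ref{clm:free-vertex-shortening} therefore cannot be invoked at $w$, and your remark that new bad cycles pass through $w$ ``with $w$ then free'' does not hold for your support. Nothing in your argument excludes a bad cycle of length at least four that goes through $w$ and uses the deleted edge $wa$, now fixed at $M_{wa}$, while $w$ keeps some other support edge. The fallback plan at the end is not made precise. (A smaller slip: $xyz$ is \emph{not} bad for a support that contains $f$ but no side edge at $z$, precisely because $z$ is $f$-dominated. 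You mean supports avoiding all three edges.)

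The missing idea is to check goodness for a support in which the reachable witnesses are completely free. Let $A\subseteq W$ and $P\subseteq F$ be the vertices reachable from $R$ by alternating paths, so that $R\subseteq A$ and $P=N_H(A)$. Let $D$ consist of \emph{all} edges of $S$ incident with $A\setminus R$, and set $S^+:=(S\setminus D)\cup P$. Your counting already gives $|S^+|\le|S|$: each $f\in P\setminus S$ has a side edge in $D$ at its own partner $m(f)\in A\setminus R$, and deleting the other edges of $D$ only helps. Now no edge of $S^+$ meets $A$, and a shortest bad cycle $C$ for $S^+$ falls into one of two cases.

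If $C$ avoids $A$, it lies in $G[X]$ and uses no edge of $D$. Hence $S\cap E(C)\subseteq S^+\cap E(C)$, and $C$ would be bad for $S$ by Observation~\ref{obs:support-monotone}, a contradiction.

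If $C$ meets $A$ at some $z$, it is a triangle by Claim~\ref{clm:free-vertex-shortening}, and it is initially bad. Its core edge $f$ lies opposite $z$, hence in $N_H(A)=P\subseteq S^+$. So $S^+$ restricts to exactly $\{f\}$ on $C$, which domination of $z$ forbids.

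This is the paper's argument, and the paper outputs $S^+$ directly. Your own support deletes only a subset of $D$, so it contains $S^+$ and is good by Observation~\ref{obs:support-monotone}. Your construction therefore survives, but only through this comparison with $S^+$.
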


\begin{proof}
The forward direction is immediate, since restricting to an induced subinstance cannot create a bad cycle. For the reverse direction, let \(S\) be a good support for \(\mathcal I[ U \cup \mathcal B \cup W_M]\).

Consider the directed alternating graph obtained from \(H\) by orienting every edge of \(M\) from \(F\) to \(W\), and every edge of \(E(H)\setminus M\) from \(W\) to \(F\). Let \(A\subseteq W\) and \(P\subseteq F\) be the vertices reachable from \(R\) in this directed graph, on the two sides respectively.

We have \(P=N_H(A)\). Indeed, if \(z\in A\) and \(f\in N_H(z)\), then either \(zf\notin M\), in which case the alternating path reaching \(z\) extends to \(f\), or \(zf\in M\), in which case \(z\notin R\) and the alternating path reaching \(z\) must enter \(z\) through the matching edge \(fz\). In both cases \(f\in P\).

Moreover, by maximality of \(M\), every vertex of \(P\) is matched. Otherwise an alternating path from \(R\) to an unmatched vertex of \(F\) would be an augmenting path. The matching partner of every \(f\in P\) belongs to \(A\setminus R\), and conversely every vertex of \(A\setminus R\) is matched to a vertex of \(P\). Thus \(M\) gives a bijection between \(P\) and \(A\setminus R\). For \(f\in P\), let \(m(f)\in A\setminus R\) denote its matched vertex.

Now define
\[
        D:=\{e\in S:e\text{ is incident with a vertex of }A \setminus R\}.
\]
For every \(f=ab\in P\setminus S\), the vertex \(m(f)\) is a witness for \(f\), so the triangle \(abm(f)\) is initially bad. Since \(f\notin S\), goodness of \(S\) forces \(S\) to contain one of the side edges \(a m(f), b m(f)\), which belongs to \(D\). These side edges are distinct for different \(f\), hence
\(|D|\ge |P\setminus S|\).
Now set
\[
        S^+:=(S\setminus D)\cup P.
\]
The edges of \(P\) have both endpoints in \(U\), while every edge of \(D\) is incident with \(A \setminus R\subseteq W\). Therefore \(|S^+| \le |S|\) and  
no edge of \(S^+\) is incident with any vertex of \(A \).

It remains to show that \(S^+\) is good for the original instance. Suppose not,
and let \(C\) be a shortest cycle that is bad for \(S^+\). If \(C\) avoids
\(A\), then it also avoids \(R\), since \(R\subseteq A\). Hence \(C\) is
contained in the kept instance \(\mathcal I[U\cup \mathcal B\cup W_M]\). Moreover, no
edge of \(D\) lies on \(C\), so on the edges of \(C\) the support \(S^+\) is
obtained from \(S\) only by possibly adding edges of \(P\). By
Observation~\ref{obs:support-monotone}, \(C\) would also be bad for \(S\) in
\(\mathcal I[U\cup \mathcal B\cup W_M]\), a contradiction. Hence \(C\) contains some \(z\in A\). Since no edge of \(S^+\) is incident
with \(z\), Claim~\ref{clm:free-vertex-shortening} implies that \(C\) is a
triangle.

Since \(C\) is bad for \(S^+\), Observation \ref{obs:support-monotone} shows that \(C\) is initially bad. Thus, by maximality of \(\mathcal T\), it contains an edge \(f\in F\). As \(z\notin U\), this core edge is opposite \(z\), and \(z\) is a witness for \(f\). Since \(z\in A\subseteq W\), we have \(f\in N_H(A)\), so \(f\in P \subseteq S^+\). But every witness in \(W\) is dominated, and no side edge incident with \(z\) belongs to \(S^+\). Therefore the restriction of \(S^+\) to \(C\) is exactly \(\{f\}\), which makes \(C\) non-bad, a contradiction.

Thus \(S^+\) is a good support of size at most \(|S|\) for the original instance, and deleting \(R\) preserves equivalence. Moreover, \(S^+\) can easily be computed from \(S\). Indeed, since the maximum matching \(M\) is known, computing \(D\) and \(P\) can be done in \(\mathcal{O}(n^2) \) time. 
\end{proof}

\kerneltheorem*

\begin{proof}
If \(n\le 6k\), then the original instance is already a kernel of the desired size, in both the interval and the standard case, so assume \(k<n/6\). The NO-answers above are correct: edge-disjoint initially bad triangles force distinct support edges, and Claim \ref{clm:nondominated} handles the non-dominated vertices. Thus we may assume \(|\mathcal T|\le k\) and \(|\mathcal B|\le k\). 

Let \(M\) be a maximum matching in \(H\), and let \(W_M\) be the set of vertices of \(W\) saturated by \(M\). By Claim \ref{clm:reduction_safe}, the instance induced by
\[
        Y:=U\cup \mathcal B\cup W_M
\]
is equivalent to the original instance. Since \(M\) saturates \(W_M\), we have \(|W_M|\le |F|\le 3k\). Therefore
\[
        |Y|=|U|+|\mathcal B|+|W_M|\le 3k+k+3k=7k,
\]
so the returned instance has the desired size.  

For the running time, computing \(F\) and \(U\) can be done in \(\mathcal{O}(n^3)\) time by checking all possible triangles. Since \(k<n/6\), we have \(|F|\le 3k=\mathcal{O}(n)\). The set \(\mathcal B\) and the graph \(H\) can be constructed in \(\mathcal{O}(n|F|)=\mathcal{O}(n^2)\) time, and \(H\) has \(\mathcal{O}(n)\) vertices and \(\mathcal{O}(n^2)\) edges. A maximum matching in \(H\) can be computed in \(\mathcal{O}(n^3)\) time. Thus the kernel is computed in \(\mathcal{O}(n^3)\) time.

For the standard \MVD problem, it is easy to see that picking any edge fixes a bad triangle and so \(\mathcal B=\emptyset\). The bound then improves to \(|Y|\le 3k+3k=6k\).
\end{proof}

For the optimization version, we run the same preprocessing without using a
guessed budget. The number of edge-disjoint initially bad triangles and the
number of non-dominated outside vertices are both lower bounds on OPT. Hence
the reduced instance has at most \(6|\mathcal T|+|\mathcal B|\le 7\mathrm{OPT}\)
vertices. Moreover, Claim \ref{clm:reduction_safe} shows that any solution in the kernel can be lifted in \(\mathcal{O}(n^2) \) time to a solution of the original instance, which means that we can indeed apply our kernel before running any approximation algorithm to improve the approximation ratio. Applying the \(\mathcal{O}(\log n)\)-approximation algorithm to this reduced instance
therefore gives an \(\mathcal{O}(\log \mathrm{OPT})\)-approximation.

\section{ETH lower bound}\label{section:eth}

In this section we show that the single-exponential dependence on \(k\) in the complete-graph algorithm cannot be improved to subexponential under ETH. The reduction is for the standard version \MVD, and therefore also applies to \textsc{Interval Metric Violation} by taking sufficiently large intervals. It follows closely the NP-hardness proof of \cite{fan2018metric}. 

\ethlowerbound*

\begin{proof}
We reduce from \Problem{Vertex Cover}, for which no \(2^{o(k)}n^{\mathcal{O}(1)}\)-time algorithm exists under ETH \cite{impagliazzo2001which,cygan2015parameterized}. Let \((H=(V,E),k)\) be an instance of \Problem{Vertex Cover}. We construct a complete graph \(G\) on vertex set \(V(G)=V\cup\{r\} \), 
where \(r\) is a new vertex. For every \(v\in V\), set \(\delta(rv)=1\). For every pair \(u,v\in V\), set
\[
        \delta(uv)=
        \begin{cases}
        3, & uv\in E,\\
        2, & uv\notin E.
        \end{cases}
\]
The budget remains \(k\).

For every edge \(uv\in E\), the triangle \(ruv\) is bad, with heavy edge \(uv\), because
\[
        \delta(uv)=3>1+1=\delta(ru)+\delta(rv).
\]
For every non-edge \(uv\notin E\), the same triangle is not bad, since \(\delta(uv)=2=1+1\). Moreover, no triangle contained entirely in \(V\) is bad: all its edge weights are in \(\{2,3\}\), so the largest edge has weight at most \(3\), while the sum of the other two is at least \(4\).

Suppose first that \(H\) has a vertex cover \(X\subseteq V\) with \(|X|\le k\). Let
\[
        S=\{rv:v\in X\}.
\]
We claim that \(G-S\) has no bad cycle. Root edges have weight \(1\), and hence cannot be heavy. Edges of weight \(2\) cannot be heavy either, because every alternative path between their endpoints has weight at least \(2\). Finally, let \(uv\in E\), so \(\delta(uv)=3\). Since \(X\) is a vertex cover, at least one of \(ru,rv\) belongs to \(S\), and therefore the length-two path \(u-r-v\) is not present in \(G-S\). Every remaining \(u,v\)-path has weight at least \(3\). Thus no edge of \(G-S\) is heavier than any alternative path between its endpoints, so \(G-S\) has no bad cycle and \(S\) is a solution of size at most \(k\).
 
Conversely, suppose that the constructed instance has a feasible support \(S\) with \(|S|\le k\). We build a vertex cover \(X\) of \(H\). If \(rv\in S\), put \(v\) into \(X\). If \(uv\in S\) for an edge \(uv\in E\), put one arbitrary endpoint of \(uv\) into \(X\). Ignore selected edges corresponding to non-edges of \(H\). Then \(|X|\le |S|\le k\).

Now take any edge \(uv\in E\). The triangle \(ruv\) is bad, and \(S\) contains at least one of \(ru,rv,uv\). In the first case \(u\in X\), in the second case \(v\in X\), and in the third case we explicitly put one endpoint of \(uv\) into \(X\). Hence \(uv\) is covered by \(X\). Therefore \(X\) is a vertex cover of size at most \(k\).

The reduction is polynomial, adds only one vertex, uses weights in \(\{1,2,3\}\), and preserves the parameter. A \(2^{o(k)}n^{\mathcal{O}(1)}\)-time algorithm for \MVD would therefore imply such an algorithm for \Problem{Vertex Cover}, contradicting the standard ETH lower bound for \Problem{Vertex Cover}.
\end{proof}
\section{Ultrametric Violation Distance}\label{sec:uvd}

We record that our results for \MVD transfer to ultrametrics by a standard
blow-up of the distances.  By the classical characterization, a complete distance function is an
ultrametric if and only if in every triangle the maximum distance is attained
at least twice \cite{ardila2006bergman}. As with bad cycles in the standard metric case, this can be lifted to partial edge-length function: A partial edge-length function extends to an
ultrametric if and only if no cycle has a unique maximum-length edge. 
Thus a support
\(S\subseteq E(G)\) is feasible for \Problem{Ultrametric Violation Distance}
if and only if \(G-S\) contains no cycle whose maximum-length edge is unique.

Suppose now that \( (G, \delta,k) \) is an instance of \Problem{Ultrametric Violation Distance} and let \(\lambda_1<\cdots<\lambda_t\) be the distinct input lengths, set
\(N:=|V(G)|+1\), and define
\[
    \delta'(e):=N^i
    \qquad\text{whenever }\delta(e)=\lambda_i .
\]
Then a cycle has a unique maximum edge under \(\delta\) if and only if it is a
bad cycle under \(\delta'\): if the unique maximum has rank \(i\), then its new
length is \(N^i\), while all other edges have length at most \(N^{i-1}\), and
\(N^i>(|C|-1)N^{i-1}\); conversely, if the maximum under \(\delta\) is attained
at least twice, then no edge can be larger than the sum of the others after the
rescaling.  Hence \((G,\delta,k)\) and the standard \MVD instance
\((G,\delta',k)\) have exactly the same feasible supports.  This is the same
blow-up argument used in \cite[proof of Corollary~5.16]{cohenaddad2022fitting}.
The rescaled lengths have polynomial bit complexity.

\begin{corollary}\label{cor:uvd}
\Problem{Ultrametric Violation Distance} can be solved in
\(2^{\mathcal{O}(k)}n^{\mathcal{O}(1)}\) time, and admits a solution-lifting
kernel with at most \(6k\) vertices, computable in \(\mathcal{O}(n^3)\) time.
\end{corollary}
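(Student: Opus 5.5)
The plan is to reduce directly to the standard \MVD results via the blow-up \(\delta'\) described just above, and then invoke Theorem~\ref{thm:FPT} and Theorem~\ref{thm:linear_kernel} as black boxes. The transformation \((G,\delta,k)\mapsto(G,\delta',k)\) is computable in polynomial time. Each \(\delta'(e)=N^i\) with \(i\le t\le \binom{n}{2}\) has \(\mathcal{O}(n^2\log n)\) bits, so the instance size stays polynomial. By the preceding discussion, a support \(S\) is feasible for \Problem{Ultrametric Violation Distance} under \(\delta\) if and only if \(G-S\) has no cycle with a unique maximum edge, if and only if \(G-S\) has no bad cycle under \(\delta'\), if and only if \(S\) is feasible for \MVD under \(\delta'\). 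This uses the bad-cycle characterization of the standard problem recalled in the preliminaries, equivalently Lemma~\ref{lem:good_feasible} with permissive intervals.

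\textbf{Running time.} Running the algorithm of Theorem~\ref{thm:FPT} on \((G,\delta',k)\) decides the ultrametric instance in \(2^{\mathcal{O}(k)}n^{\mathcal{O}(1)}\) time. The polynomial factor absorbs the arithmetic on \(\mathcal{O}(n^2\log n)\)-bit numbers. If a support is returned, it is a feasible ultrametric support, and the actual ultrametric values can be recovered from the characterization of partial ultrametrics, for instance by assigning minimax path values in \(G-S\).

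\textbf{Kernel.} Apply Theorem~\ref{thm:linear_kernel} in its standard-\MVD form to \((G,\delta',k)\). This gives an induced subinstance \(\mathcal I[Y]\) with \(|Y|\le 6k\). The one point to check is that the kernel should again be an \emph{ultrametric} instance. I would output \((G[Y],\delta|_{E(G[Y])},k)\). Restricting \(\delta\) to \(Y\) and then blowing up with the constant \(N=|V(G)|+1\) gives exactly \(\delta'|_Y\). The equivalence argument only used \(N>|C|-1\) for every cycle \(C\) in the graph considered, and this still holds in \(G[Y]\). Hence the supports feasible for the ultrametric problem on \(G[Y]\) coincide with the supports feasible for \MVD on \(\delta'|_Y\). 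It follows that the ultrametric kernel is equivalent to the original instance.

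\textbf{Solution lifting.} Claim~\ref{clm:reduction_safe} lifts any \MVD solution \(S\) of \(\delta'|_Y\) to an \MVD solution \(S^+\) of \(\delta'\) with \(|S^+|\le|S|\), in \(\mathcal{O}(n^2)\) time. By the feasible-support equivalence on both sides, \(S^+\) is an ultrametric solution of the original instance. The \(\mathcal{O}(n^3)\) running time of the kernel is unchanged, assuming unit-cost arithmetic, or is polynomial with bit costs.

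The main obstacle is already handled by the text: the equivalence between unique-maximum cycles and bad cycles under \(\delta'\), and the claim that partial ultrametric extendability is characterized by cycles. I would take both as established. The remaining subtlety is making sure the blow-up commutes with restriction to \(Y\). This works because the scaling depends only on the ranks of the lengths and on \(N\), and any \(N\) exceeding the cycle lengths suffices.
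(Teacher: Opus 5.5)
Your proposal is correct and follows the paper's route. You transfer via the support-preserving blow-up \(\delta'\), apply Theorem~\ref{thm:FPT} and the standard-case bound of Theorem~\ref{thm:linear_kernel}, output \(G[Y]\) with the original lengths, and use \(N>|Y|-1\) to keep the feasible-support equivalence, and hence solution lifting via Claim~\ref{clm:reduction_safe}, valid on the subinstance; your remarks on bit complexity and on recovering ultrametric values only spell out what the paper's short proof leaves implicit.
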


\begin{proof}
Apply Theorem~\ref{thm:FPT} to the support-preserving \MVD instance
\((G,\delta',k)\).  For the kernel, apply the standard-case bound in
Theorem~\ref{thm:linear_kernel} to \((G,\delta',k)\).  It returns an induced
subgraph \(G[Y]\) with \(|Y|\le 6k\).  We output \(G[Y]\) with the original
lengths \(\delta\).  The same base \(N\) still satisfies \(N>|Y|-1\), so the
support-preserving equivalence above also holds on the induced subinstance.
Thus the returned instance is a genuine ultrametric kernel, and the
solution-lifting guarantee transfers through the same equivalence.
\end{proof}

\section{Tree Metric Violation Distance}
\label{sec:tree-metric}

We now prove fixed-parameter tractability for the tree-metric variant.  The
ultrametric reduction of Section~\ref{sec:uvd} does not give a comparable
reduction here: the four-point condition forbids a strict maximum among three
sums, rather than among three individual lengths.  Thus the tree-metric case
requires a separate argument.

Throughout this section, a \emph{tree metric} means a positive tree metric.  In
other words, a distance function \(d\) on a finite set \(X\) is a tree metric if
\(d(x,y)>0\) for all distinct \(x,y\in X\), and there is a tree \(T\) with
nonnegative edge lengths and \(X\subseteq V(T)\) such that \(d(x,y)\) is the
length of the unique \(x,y\)-path in \(T\).  We do not require the points of
\(X\) to be leaves of \(T\).  If the positivity condition on distances is
dropped, we obtain the larger class of tree pseudometrics; this distinction is
relevant for the \(L_0\) objective, since the optimum tree pseudometric and tree
metric can differ by \(\Omega(n)\) modifications~\cite[Sec.~8.2]{cohenaddad2024fitting}.

\begin{problembox}{Tree Metric Violation Distance}
\textbf{Input:} A complete graph \(G\), an edge-length function
\(\delta:E(G)\to \mathbb R_{>0}\), and an integer \(k\).\\
\textbf{Parameter:} \(k\).\\
\textbf{Question:} Is there a set \(S\subseteq E(G)\), \(|S|\le k\), such that
the lengths of the edges in \(S\) can be changed so that the resulting distance
function is a tree metric?
\end{problembox}

As before, we call \(S\subseteq E(G)\) a \emph{feasible support} if the partial
distance function \(\delta|_{E(G)\setminus S}\) has a completion to a tree
metric.  Feasible supports are closed under taking supersets.  The goal is to
decide whether a feasible support of size at most \(k\) exists.

\treefpttheorem*

The proof uses a branching algorithm whose nodes are partial supports.  The key
is the following oracle.  Given a partial support \(S\), the oracle either
constructs a tree-metric completion, or returns a bounded set of
pairs, one of which must be added by every feasible extension of \(S\).

Let \(Z\) be the set of vertices incident with an edge of \(S\), and let
\(W:=V(G)\setminus Z\).  Then \(|Z|\le 2k\), and every pair meeting \(W\) is
fixed.  If \(\delta|_W\) is a tree metric, the vertices of \(W\) determine a
unique reduced tree \(T_W\).  Each vertex \(z\in Z\) is then forced, by its
fixed distances to \(W\), to attach to a point \(p_z\in T_W\) with an edge of length \(\lambda_z\).  The only remaining freedom is local:
vertices with the same attachment point may still be arranged among themselves.
The lemmas below make this statement algorithmic and turn failures into small
branching certificates.

\paragraph*{Tree-metric preliminaries.}
We use the following standard facts.  A finite positive metric is a tree metric
if and only if it satisfies the four-point condition: for all distinct
\(a,b,c,e\), the maximum of
\[
    d(ab)+d(ce),\qquad d(ac)+d(be),\qquad d(ae)+d(bc)
\]
is attained at least twice~\cite{buneman1971recovery,semple2003phylogenetics}.
Moreover, every finite tree metric has a unique \emph{reduced} representing
tree, obtained by requiring all edges to have positive length, every leaf to
belong to the ground set, and no degree-two vertex outside the ground set.  This
reduced tree can be constructed in polynomial time by classical exact
reconstruction algorithms~\cite{waterman1977additive}.  Finally, if a positive
distance function on a set \(Y\) is not a tree metric, then a violating triple
or quadruple can be found by inspecting all triples and quadruples of \(Y\).

\begin{observation}\label{obs:tree-restriction}
Let \(d\) be a tree metric on \(X\) with representing tree \(T\), and let
\(Y\subseteq X\), \(|Y|\ge 2\).  Then \(d|_Y\) is a tree metric.  The subtree of
\(T\) spanned by \(Y\) is isometrically embedded in \(T\), and the reduced
representing tree of \(d|_Y\) is obtained from it by pruning leaves outside
\(Y\), suppressing degree-two vertices outside \(Y\), and contracting zero-length
edges.
\end{observation}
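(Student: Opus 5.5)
The first assertion is just the statement that metric properties and the four-point condition are inherited by subsets, and we prove it directly. Every pair in \(Y\) is a pair in \(X\), and \(d(x,y)\) equals the length of the \(x,y\)-path in \(T\). Hence \(d|_Y\) is realised by the same tree \(T\), which has \(Y\subseteq V(T)\). Positivity is inherited, so \(d|_Y\) is a (positive) tree metric. Equivalently, one can note that the four-point condition for distinct quadruples of \(Y\) is a subset of the conditions for \(X\).

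For the second assertion, let \(T_Y\) be the union of the \(T\)-paths between pairs of points of \(Y\); this is the minimal subtree containing \(Y\). It is connected and is a subtree of a tree. Therefore for \(u,v\in V(T_Y)\) the unique \(u,v\)-path of \(T\) lies in \(T_Y\): the path in \(T_Y\) is a \(u,v\)-path of \(T\), and such a path is unique. So distances in \(T_Y\) coincide with those of \(T\), which is the isometric embedding. In particular \(T_Y\) represents \(d|_Y\). Its leaves all lie in \(Y\), by minimality, since a leaf outside \(Y\) could be deleted. We prefer to describe this as iteratively pruning leaves outside \(Y\) from \(T\), which yields exactly \(T_Y\).

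Next we normalise \(T_Y\) without changing any distance between points of \(Y\).

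First, suppress each degree-two vertex outside \(Y\) by merging its two incident edges into one edge of summed length.

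Second, contract every zero-length edge. If both endpoints lie in \(Y\), this would give distance \(0\) between distinct points, contradicting positivity. Hence each zero-length edge has at most one endpoint in \(Y\), and the merged vertex inherits that label.

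Contraction may create new degree-two vertices outside \(Y\), and suppression creates no zero-length edges unless both merged edges were zero. So we alternate these operations until neither applies; this terminates because the number of vertices strictly decreases.

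The result is a tree with positive edge lengths, all leaves in \(Y\) (neither operation creates new leaves outside \(Y\)), and no degree-two vertex outside \(Y\). It still realises \(d|_Y\). By uniqueness of the reduced representing tree, cited in the preliminaries, it is the reduced tree of \(d|_Y\).

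The only delicate point is the bookkeeping of the order of operations, namely that pruning, suppression and contraction can be interleaved and still terminate in a reduced tree. The paper invokes uniqueness of the reduced tree, so the outcome is independent of this order, and no real obstacle remains.
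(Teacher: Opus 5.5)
Your proof is correct, and since the paper states this observation without proof (treating it as a standard fact), there is no different route to compare against. Your argument is the routine verification it relies on: the same tree \(T\) realises \(d|_Y\), and uniqueness of paths in \(T\) gives the isometric embedding of the minimal spanning subtree. Suppression and contraction preserve \(Y\)-distances, create no unlabelled leaves, and strictly decrease the vertex count, so alternating them yields a tree meeting the reduced conditions, which is \emph{the} reduced tree by uniqueness.
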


We will also use the following elementary observation:

\begin{observation}\label{obs:tree-local-witness}
Fix \(S\subseteq E(G)\), and let \(X_0\subseteq V(G)\).  Suppose that no tree
metric on \(X_0\) agrees with \(\delta\) on
\(E(G[X_0])\setminus S\).  Then
\(F:=E(G[X_0])\setminus S\) intersects every feasible support \(S'\supseteq S\).
\end{observation}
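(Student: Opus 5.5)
The argument is a direct contrapositive, using only that feasible supports are determined by the partial distance function $\delta|_{E(G)\setminus S'}$ and that restrictions of tree metrics are tree metrics (Observation~\ref{obs:tree-restriction}). Let $S'\supseteq S$ be a feasible support, and suppose for contradiction that $S'\cap F=\emptyset$.

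\textbf{Step 1 (a global completion exists).} Feasibility of $S'$ means there is a tree metric $d$ on $V(G)$ that agrees with $\delta$ on every pair of $E(G)\setminus S'$.

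\textbf{Step 2 (restrict to $X_0$).} If $|X_0|\ge 2$, Observation~\ref{obs:tree-restriction} shows that $d|_{X_0}$ is again a tree metric; positivity is inherited because distinct points of $X_0$ are distinct points of $V(G)$. If $|X_0|\le 1$, the empty function or the trivial one-point tree is vacuously a tree metric agreeing with $\delta$ on the empty set $E(G[X_0])$. So that case never satisfies the hypothesis, and there is nothing to prove.

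\textbf{Step 3 (agreement on $F$).} Every pair of $F=E(G[X_0])\setminus S$ lies outside $S'$, since $S'\cap F=\emptyset$. Hence $d$, and therefore $d|_{X_0}$, agrees with $\delta$ on $F$. This is exactly a tree metric on $X_0$ agreeing with $\delta$ on $E(G[X_0])\setminus S$, contradicting the hypothesis. Therefore $S'\cap F\neq\emptyset$.

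There is no real obstacle here. The only point needing care is that the agreement is required on $E(G[X_0])\setminus S$ rather than $\setminus S'$. This is what lets us deduce that $S'$ must add a pair of $F$ beyond $S$: pairs of $S\cap E(G[X_0])$ are already free and impose no constraint.
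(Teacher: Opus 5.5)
Your proof is correct and is exactly the restriction argument the paper intends; the paper states this as an elementary observation without a written proof. A completion for a feasible $S'$ with $S'\cap F=\emptyset$ restricts, by Observation~\ref{obs:tree-restriction}, to a tree metric on $X_0$ that agrees with $\delta$ on $F$ (your handling of $|X_0|\le 1$ is fine), and since your argument never uses $S'\supseteq S$, it in fact shows that $F$ meets every feasible support.
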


\paragraph*{Anchors.}
Assume that \(|W|\ge 2\) and that \(\delta|_W\) is a tree metric, and let
\(T_W\) be its reduced representing tree.  For \(A\subseteq W\), \(|A|\ge 2\),
write \(\Sigma(A)\) for the subtree of \(T_W\) spanned by \(A\).  Every point of
\(\Sigma(A)\) lies on a path between two vertices of \(A\). The next lemma says that, once the attachment point is known to lie in
\(\Sigma(A)\), the distances to \(A\) determine both the attachment point and
the pendant length.
\begin{lemma}[Anchor rigidity]\label{clm:anchor}
Let \(A\subseteq W\) with \(|A|\ge 2\), and let
\(p,q\in\Sigma(A)\) and \(\lambda,\lambda'\ge 0\) satisfy
\[
        \lambda+\dist_{T_W}(q,a)=\lambda'+\dist_{T_W}(p,a)
        \qquad\text{for every } a\in A .
\]
Then \(q=p\) and \(\lambda=\lambda'\).
\end{lemma}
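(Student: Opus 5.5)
We will prove the statement by a path argument inside the tree \(T_W\). Assume \(p\neq q\), and let \(\pi\) be the unique \(p,q\)-path in \(T_W\), of length \(L:=\dist_{T_W}(p,q)>0\). Set \(c:=\lambda'-\lambda\), so the hypothesis reads
\[
\dist_{T_W}(q,a)-\dist_{T_W}(p,a)=c\qquad\text{for every } a\in A .
\]
The plan is to find two vertices \(a,a'\in A\) such that \(a\) lies ``beyond \(p\)'' and \(a'\) lies ``beyond \(q\)'' relative to \(\pi\). We then compute both differences and derive a contradiction.

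\textbf{Step 1: the two differences.} Removing the interior of \(\pi\) (more precisely, cutting \(T_W\) at the edges of \(\pi\)), each vertex \(a\) of \(T_W\) projects to a nearest point \(\rho(a)\) on \(\pi\). In a tree we have
\[
\dist(q,a)-\dist(p,a)=\dist(q,\rho(a))-\dist(p,\rho(a)).
\]
This difference equals \(L\) exactly when \(\rho(a)=p\), and \(-L\) exactly when \(\rho(a)=q\). If some \(a\in A\) projects to \(p\) and some \(a'\in A\) projects to \(q\), then \(c=L=-L\). This forces \(L=0\), a contradiction.

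\textbf{Step 2: finding such \(a,a'\).} This is the main step, and it uses \(p,q\in\Sigma(A)\). Since \(p\in\Sigma(A)\), the point \(p\) lies on a path between two vertices \(a_1,a_2\in A\). If both \(a_1\) and \(a_2\) projected to points of \(\pi\) other than \(p\), they would both lie in the component of \(T_W-p\) (viewing \(p\) as a point, possibly subdividing an edge) that contains the direction toward \(q\). The \(a_1,a_2\)-path would then avoid \(p\), contradicting the choice of \(a_1,a_2\).

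More carefully, the \(a_1,a_2\)-path passes through \(p\), so \(a_1\) and \(a_2\) lie in different branches at \(p\), or one of them equals \(p\). At most one branch at \(p\) contains \(q\). Hence some \(a\in\{a_1,a_2\}\) lies in a branch at \(p\) not containing \(q\), or equals \(p\); in either case \(\rho(a)=p\). The symmetric argument at \(q\) gives \(a'\) with \(\rho(a')=q\). The expected subtlety is that \(p\) and \(q\) may be interior points of edges rather than vertices. We handle this by subdividing \(T_W\) at \(p\) and \(q\), which changes no distances.

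\textbf{Step 3: conclude.} Step 1 then forces \(L=0\), so \(p=q\). Plugging \(p=q\) back into the hypothesis, evaluated at any single \(a\in A\), gives \(\lambda=\lambda'\).
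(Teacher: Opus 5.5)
Your proof is correct, but it takes a different route from the paper's.

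\textbf{Your route.} You argue by contradiction using the geometry of the $p,q$-geodesic. Projecting onto $\pi$ turns $\dist_{T_W}(q,a)-\dist_{T_W}(p,a)$ into $L-2\dist_{T_W}(p,\rho(a))$. From each pair of vertices of $A$ whose path passes through $p$ (respectively $q$), you \emph{select} the endpoint on the far side, beyond $p$ (respectively beyond $q$). This pins $c$ to both $L$ and $-L$.

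\textbf{The paper's route.} The paper instead \emph{averages} over such a pair. It takes $a_1,a_2$ with $q$ on their path and adds the two equalities, giving $2\lambda+\dist_{T_W}(a_1,a_2)=2\lambda'+\dist_{T_W}(p,a_1)+\dist_{T_W}(p,a_2)\ge 2\lambda'+\dist_{T_W}(a_1,a_2)$, hence $\lambda\ge\lambda'$. Symmetry gives $\lambda=\lambda'$. Equality in the triangle inequality then puts $p$ on the same $a_1,a_2$-path, at the same distance from $a_1$ as $q$, so $p=q$.

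\textbf{What each buys.} The paper's argument is direct and purely metric: it uses only the triangle inequality and the fact that equality holds exactly on the path. It therefore avoids the projection and branch bookkeeping, and the subdivision at $p$ and $q$, that your version needs. Your version gives sharper information in return. Over $a\in A$, the differences $\dist_{T_W}(q,a)-\dist_{T_W}(p,a)$ attain both extremes $\pm\dist_{T_W}(p,q)$ precisely because $p,q\in\Sigma(A)$. This is the same mechanism behind the paper's remark after the lemma: when $p\notin\Sigma(A)$, all of $A$ projects to one end of the geodesic and the difference collapses to a constant.
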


\begin{proof}
Choose \(a_1,a_2\in A\) such that \(q\) lies on the \(a_1,a_2\)-path.  Adding
the two corresponding equalities gives
\[
    2\lambda+\dist_{T_W}(a_1,a_2)
    =2\lambda'+\dist_{T_W}(p,a_1)+\dist_{T_W}(p,a_2)
    \ge 2\lambda'+\dist_{T_W}(a_1,a_2),
\]
and hence \(\lambda\ge\lambda'\).  By symmetry, using two vertices of \(A\) whose
path contains \(p\), we also get \(\lambda'\ge\lambda\).  Thus
\(\lambda=\lambda'\), and the displayed hypothesis gives
\(\dist_{T_W}(q,a)=\dist_{T_W}(p,a)\) for all \(a\in A\).  In particular, equality
holds in the triangle inequality above, so \(p\) also lies on the
\(a_1,a_2\)-path.  Two points on this path with the same distance from \(a_1\)
coincide, so \(p=q\).
\end{proof}

The condition \(p,q\in\Sigma(A)\) is essential.  If \(p\) lies outside
\(\Sigma(A)\) and \(q\) is its projection to \(\Sigma(A)\), then distances from
\(p\) to all vertices of \(A\) differ from the corresponding distances from
\(q\) by the same additive constant.  Thus an attachment at \(p\) is
indistinguishable from an attachment at \(q\) after changing the pendant length
by that constant.

For every point \(p\) of \(T_W\), fix two vertices \(A_p:=\{a,b\}\subseteq W \)
    such that \(p \) lies on the unique  \(a,b\) path.
Such vertices exist because \(|W|\ge 2\), every point of a tree lies on a path
between two leaves, and every leaf of the reduced tree \(T_W\) belongs to \(W\).
Then \(p\in\Sigma(A)\) for every \(A\supseteq A_p\).

The following lemma is a core subroutine of our algorithm. It shows how to check if a set $S$ is a solution on a small set of vertices: 
\begin{lemma}\label{lem:tree-small-completion}The following two problems can be
solved in \(2^{\mathcal{O}(m\log m)}m^{\mathcal{O}(1)}\) time.

\begin{enumerate}
    \item \emph{Partial tree-metric completion.}
    Given a set \(X_0\) of \(m\) terminals, a set
    \(F_0\subseteq \binom{X_0}{2}\) of fixed pairs, and positive values
    \(\mu(xy)>0\) for all \(xy\in F_0\), decide whether there is a tree metric
    \(d\) on \(X_0\) such that
    \[
        d(x,y)=\mu(xy) \qquad\text{for every }xy\in F_0 .
    \]

    \item \emph{Rooted partial tree completion.}
    Given a set \(R\) of \(m\) terminals, a root symbol \(r\notin R\),
    prescribed values \(h(z)\ge 0\) for all \(z\in R\), a set
    \(F_0\subseteq \binom{R}{2}\) of fixed terminal pairs, and positive values
    \(\mu(zz')>0\) for all \(zz'\in F_0\), decide whether there is a tree
    containing the terminals \(R\) and a marked root \(r\) such that all edges
    have positive length,
    \[
        \dist(r,z)=h(z) \qquad\text{for all }z\in R,
    \]
    \[
        \dist(z,z')=\mu(zz') \qquad\text{for all }zz'\in F_0,
    \]
    and the terminals in \(R\) are represented by distinct vertices.
    The root \(r\) is allowed to be placed on the same vertex as one terminal
    of \(R\).
\end{enumerate}

In both cases, if the answer is yes, a witnessing tree can also be returned
within the same running time.
\end{lemma}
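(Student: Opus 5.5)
We guess the combinatorial shape of the tree and then solve a linear feasibility problem for its edge lengths. Both problems reduce to a bounded number of linear programs, one per candidate tree topology.

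For the partial tree-metric completion, any tree metric on \(X_0\) is represented by its reduced tree (Observation~\ref{obs:tree-restriction}). That tree has every leaf in \(X_0\) and no degree-two Steiner vertex, so it has at most \(m-2\) Steiner vertices and hence at most \(2m-2\) vertices in total. We enumerate every tree shape on vertex set \(X_0\cup\{s_1,\dots,s_{m-2}\}\), with unused Steiner vertices simply discarded. We also allow terminals to be internal, and we let two terminals share a position through zero-length edges. By Cayley's formula there are at most \((2m)^{2m-2}=2^{\mathcal O(m\log m)}\) labelled trees on these vertices. For each shape \(T\) we introduce one variable \(\ell_e\ge 0\) per edge. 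For each pair \(x,y\in X_0\) the distance \(d_T(x,y)=\sum_{e\in P_T(x,y)}\ell_e\) is linear in these variables. We impose \(d_T(x,y)=\mu(xy)\) for \(xy\in F_0\) and \(d_T(x,y)\ge \varepsilon\) for all other pairs, where the positivity requirement of a tree metric is handled as described below. Each system has \(\mathcal O(m)\) variables and \(\mathcal O(m^2)\) constraints, so it is solvable in polynomial time by linear programming, and a feasible solution directly yields the witnessing tree.

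The main obstacle is the strict inequality \(d(x,y)>0\) for the non-fixed pairs, which an LP cannot impose directly. I would handle it by maximizing a slack variable \(t\) subject to \(d_T(x,y)\ge t\) for all pairs and \(t\le 1\). The instance is then feasible for \(T\) exactly when the optimum satisfies \(t>0\). This uses no special precision, since the LP optimum is attained and we only test its sign.

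For the rooted partial tree completion we apply the same scheme to the terminal set \(R\cup\{r\}\). We enumerate trees on \(R\cup\{r\}\) plus at most \(m-1\) Steiner vertices, and we choose whether \(r\) is identified with a specific terminal or is a separate vertex, which gives \(m+1\) choices. The constraints are the following.
\begin{itemize}
\item \(\ell_e>0\) for every edge, again via the slack variable \(t\le \ell_e\). Any valid tree has a reduced form with positive edges, and since we only need existence, enumerating reduced shapes suffices.
\item \(d_T(r,z)=h(z)\) for every \(z\in R\).
\item \(d_T(z,z')=\mu(zz')\) for every \(zz'\in F_0\).
\end{itemize}
Because all edges have positive length, distinct terminals are automatically represented by distinct vertices. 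The total count is \(2^{\mathcal O(m\log m)}\) shapes, each requiring a polynomial-size LP, which gives the claimed running time \(2^{\mathcal O(m\log m)}m^{\mathcal O(1)}\). A feasible LP solution together with its shape is the returned witness.
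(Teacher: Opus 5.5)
Your proposal is correct and follows essentially the paper's route: enumerate \(2^{\mathcal{O}(m\log m)}\) tree topologies (with the root label optionally sharing a vertex with a terminal in the rooted case), and for each solve a linear program that maximizes a slack variable bounded by \(1\), so that strict positivity becomes a sign test on the optimum. The only difference is in the unrooted case: you allow zero-length edges and impose positivity on all terminal-to-terminal distances, whereas the paper enumerates only reduced topologies with injectively placed labels and requires every edge length to be positive. Both variants are valid.
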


\begin{proof}
We first consider the partial tree-metric completion problem. We enumerate all reduced marked tree topologies for the terminal set \(X_0\).
By this we mean tree topologies in which the labels of \(X_0\) are placed
injectively on vertices, every leaf carries a label, and every unlabelled
vertex has degree at least three. Labels are not required to be leaves: a
terminal may be an internal vertex of the representing tree.

There are only \(2^{\mathcal{O}(m\log m)}\) such topologies. Indeed, every
such topology has at most \(2m-2\) vertices: if there are \(m\) labelled
vertices, every leaf is labelled, and every unlabelled vertex has degree at
least three, then the usual degree-counting argument bounds the number of
unlabelled vertices by \(m-2\). Thus we can enumerate, for each
\(t\le 2m-2\), all labelled trees on \(t\) auxiliary vertices and all placements
of the \(m\) terminal labels, and keep only the reduced marked tree topologies.
The number of possibilities is \(2^{\mathcal{O}(m\log m)}\).

Fix one enumerated topology \(\tau\). We introduce a variable \(\ell_e\) for
every edge \(e\) of \(\tau\). For \(x,y\in X_0\), let
\[
    D^\tau_{xy}(\ell)=\sum_{e\in \tau[x,y]}\ell_e
\]
be the length of the unique \(x\)-\(y\) path in \(\tau\). We impose the linear
equations
\[
    D^\tau_{xy}(\ell)=\mu(xy)
    \qquad\text{for every }xy\in F_0 .
\]

It remains to enforce that the chosen topology has positive edge lengths.
To avoid strict inequalities, introduce one additional variable \(\eta\), and
solve the linear program
\[
\begin{array}{rl}
\text{maximize} & \eta \\[2mm]
\text{subject to} &
D^\tau_{xy}(\ell)=\mu(xy)
    \qquad\text{for every }xy\in F_0,\\
& \ell_e\ge \eta
    \qquad\text{for every edge }e\in E(\tau),\\
& 0\le \eta\le 1 .
\end{array}
\]
The topology \(\tau\) is feasible if and only if the optimum value is strictly
positive. If the optimum is positive, all edges of \(\tau\) have positive
length, and therefore all distances between distinct terminals of \(X_0\) are
positive. Conversely, any tree-metric completion whose reduced representing
tree has topology \(\tau\) has all edge lengths positive, so the minimum edge
length gives a feasible positive value of \(\eta\), after truncating at \(1\)
if necessary.

Thus feasibility for a fixed topology can be checked by linear programming.
If the test succeeds, the edge lengths returned by the linear program give a
witnessing tree. Since the number of topologies is
\(2^{\mathcal{O}(m\log m)}\), the claimed running time follows for the first
problem.

The rooted partial completion problem is similar. We enumerate reduced marked
tree topologies with terminal labels \(R\) and one additional root label \(r\).
The labels in \(R\) must be placed injectively on vertices, but the root label
\(r\) may be placed either on an unlabelled vertex or on a vertex already
carrying one terminal label. Again every leaf must carry at least one label,
and every unlabelled vertex must have degree at least three. The same counting
argument gives \(2^{\mathcal{O}(m\log m)}\) topologies.

Fix such a rooted topology \(\tau\). For labels \(a,b\in R\cup\{r\}\), let
\(D^\tau_{ab}(\ell)\) denote the length of the path between their marked
vertices. If \(r\) is placed on the same vertex as some terminal \(z\), then
\(D^\tau_{rz}(\ell)=0\).

We impose the equations
\[
    D^\tau_{rz}(\ell)=h(z)
    \qquad\text{for every }z\in R,
\]
and
\[
    D^\tau_{zz'}(\ell)=\mu(zz')
    \qquad\text{for every }zz'\in F_0 .
\]
As before, we introduce \(\eta\) and maximize it subject to these equations,
\[
    \ell_e\ge \eta
    \qquad\text{for every edge }e\in E(\tau),
\]
and
\[
    0\le \eta\le 1 .
\]
The topology is feasible if and only if the optimum value is strictly positive.

The distinction from the first problem is that the root is not required to be a
separate terminal of a positive metric. Thus \(h(z)=0\) is allowed: it is
represented by placing the root label \(r\) on the same vertex as \(z\).
However, two distinct terminals of \(R\) are never placed on the same vertex,
and all edges have positive length, so the distance between any two distinct
terminals in \(R\) is automatically positive.

Finally, any rooted completion with nonnegative edge lengths can be converted
to one of the topologies enumerated above by contracting all zero-length edges
and then suppressing unlabelled degree-two vertices. The terminals in \(R\)
remain distinct because their pairwise distances are required to be positive;
only the root may become identified with a terminal. This proves correctness
of the rooted test and completes the proof.
\end{proof}

We are now ready to prove the main lemma of the search tree. 

\begin{lemma}
\label{lem:tree-partial-completion}
Let \(S\subseteq E(G)\), \(|S|\le k\).  There is an algorithm running in
\(k^{\mathcal{O}(k)}n^{\mathcal{O}(1)}\) time that decides whether
\(\delta|_{E(G)\setminus S}\) has a tree-metric completion.  If no completion
exists, the algorithm outputs a set \(F\subseteq E(G)\setminus S\),
\(|F|=\mathcal{O}(k^2)\), that intersects every feasible support
\(S'\supseteq S\).
\end{lemma}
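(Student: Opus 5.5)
We follow the outline sketched before the anchor lemma. Let \(Z\) be the set of endpoints of edges of \(S\) and \(W:=V(G)\setminus Z\), so \(|Z|\le 2k\) and every pair meeting \(W\) is fixed. In every failure case we will exhibit a vertex set \(X_0\) of size \(\mathcal{O}(k)\) such that no tree metric on \(X_0\) agrees with \(\delta\) on \(E(G[X_0])\setminus S\). Observation~\ref{obs:tree-local-witness} then lets us output \(F:=E(G[X_0])\setminus S\), which has size \(\mathcal{O}(k^2)\).

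\textbf{Step 1 (the fixed part).} First check whether \(\delta|_W\) is a tree metric. If it is not, a violating triple or quadruple inside \(W\) exists and can be found by inspection. All its pairs are fixed, so this set serves as \(X_0\). If \(|W|\le 1\), we have \(|V(G)|\le 2k+1\). In that case we run the partial tree-metric completion of Lemma~\ref{lem:tree-small-completion} directly on all of \(V(G)\), with \(F_0=E(G)\setminus S\), in \(k^{\mathcal{O}(k)}\) time; on failure we take \(X_0=V(G)\).

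\textbf{Step 2 (attachments).} Otherwise we build \(T_W\). For each \(z\in Z\), the fixed distances \(\delta(z,w)\), \(w\in W\), must be realised by attaching \(z\) to some point \(p_z\in T_W\) via a pendant of length \(\lambda_z\). By Observation~\ref{obs:tree-restriction}, in any completion the subtree spanned by \(W\) is \(T_W\), and \(z\) projects onto it. We compute \(p_z,\lambda_z\) as follows. Pick any \(a,b\in W\) and locate the projection of \(z\) on the \(a,b\)-path using the Gromov products of \(z,a,b\). Then move outward using further vertices of \(W\); concretely, we test candidate points of \(T_W\) and use Lemma~\ref{clm:anchor} for uniqueness. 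Next we verify that \(\lambda_z+\dist_{T_W}(p_z,w)=\delta(z,w)\) for all \(w\in W\). If this fails, or no valid point exists, then Lemma~\ref{clm:anchor} with the anchor set \(A_{p}\cup\{w\}\) shows that a constant number of vertices of \(W\) together with \(z\) already admit no tree completion. That set is \(X_0\), with all its pairs fixed.

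\textbf{Step 3 (local arrangement).} Group the vertices of \(Z\) by attachment point \(p\). By anchor rigidity any completion must attach exactly the vertices of the class \(Z_p\) into a subtree hanging at \(p\), at prescribed heights \(\lambda_z\). Vertices in different classes, or at different branches, have their mutual distance forced, namely \(\lambda_z+\dist_{T_W}(p_z,p_{z'})+\lambda_{z'}\) or a branch-dependent version of it. The genuinely free structure is therefore the rooted partial tree completion of Lemma~\ref{lem:tree-small-completion}. For each class we take root \(r=p\), heights \(h(z)=\lambda_z\), and fixed pairs the pairs of \(Z_p\) not in \(S\). This runs in \(k^{\mathcal{O}(k)}\) time since \(|Z_p|\le 2k\). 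The delicate point is how vertices of \(Z_p\) interact with the branches of \(T_W\) at \(p\), or with an edge interior if \(p\) subdivides an edge. We handle it by adding to each class the \(\mathcal{O}(1)\) anchors \(A_p\) plus one representative leaf per relevant branch of \(T_W\) at \(p\), where "relevant" means a branch toward which some \(z\) in another class or in \(Z_p\) lies. We then run the unrooted completion on this set of size \(\mathcal{O}(k)\) instead.

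\textbf{Step 4 (assembly).} We also check fixed pairs between different classes against the forced formula. A mismatch gives \(X_0=\{z,z'\}\cup A_{p_z}\cup A_{p_{z'}}\). If all local tests pass, gluing the local trees onto \(T_W\) yields a completion. If some local test fails, its \(\mathcal{O}(k)\) terminal set is \(X_0\). The main obstacle is Step 3: we must justify that the choice of finitely many branch representatives loses nothing. Every completion must project these representatives exactly as \(T_W\) does, and distances to other vertices of the same branch are then determined. We expect to prove this via Observation~\ref{obs:tree-restriction} and Lemma~\ref{clm:anchor}.
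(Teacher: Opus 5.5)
Your skeleton matches the paper's proof: the split into \(Z\) and \(W\), the case \(|W|\le 1\), the tree \(T_W\), the gates \((p_z,\lambda_z)\), the inter-class test with certificate \(\{z,z'\}\cup A_{p_z}\cup A_{p_{z'}}\), a per-gate completion, and gluing. The gap is Step~3. You flag it as the main obstacle and never prove it, and the worry that drives it is a misconception. Because \(p_z\) is by definition the first point of \(T_W\) on the path from \(z\), in every completion each \(z\in Z_p\) is joined to (the copy of) \(T_W\) only through \(p\). This holds also when \(p\) is interior to an edge. So \(Z_p\) does not interact with the branches of \(T_W\) at \(p\) at all:
\begin{itemize}
\item distances from \(z\) to \(w\in W\) are \(\lambda_z+\dist_{T_W}(p,w)\);
\item distances to a vertex \(z'\) of another class are \(\lambda_z+\dist_{T_W}(p,p_{z'})+\lambda_{z'}\);
\item which vertices of \(Z_p\) share a pendant branch is exactly the freedom the rooted instance encodes.
\end{itemize}
There is no ``branch-dependent'' formula and no need for representatives. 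The paper simply keeps the rooted instance on \(Z_p\cup\{r_p\}\) that you wrote down and then abandoned. What you were missing is its failure certificate, \(X_0:=Z_p\cup A_p\). In any tree metric on \(X_0\) agreeing with \(\delta\) off \(S\), Lemma~\ref{clm:anchor} with \(A=A_p\) (using \(p\in\Sigma(A_p)\)) forces each \(z\in Z_p\) to attach at \(p\) with pendant length \(\lambda_z\). The subtree spanned by \(Z_p\) and \(p\) would then be a feasible rooted completion, a contradiction. Conversely, if all rooted instances and inter-class tests succeed, gluing each rooted tree at \(p\) onto \(T_W\) gives a completion.

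Your representative-based variant is not wrong, but the representatives do no work. On failure, \(E(G[X_p])\setminus S\) is valid by Observation~\ref{obs:tree-local-witness} for any choice of extra \(W\)-vertices. On success, the same anchor argument, with any \(A\supseteq A_p\), shows that every \(z\in Z_p\) hangs at \(p\) in the local tree, so its pendant part can be transplanted onto \(T_W\). The same argument with \(A=A_{p_z}\cup A_{p_{z'}}\) is also what justifies your Step~4 certificate. That anchor step is the whole content of Step~3, and your proposal leaves it out.

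Separately, Step~2 is mis-justified. Lemma~\ref{clm:anchor} is a uniqueness statement and cannot certify non-existence, and \(\{z\}\cup A_p\cup\{w\}\) need not be a certificate when \(p\) is a wrong candidate. For example, let \(T_W\) be a star with centre \(m\) and leaves \(a,b,w\) at distance \(1\), and let \(z\) hang with pendant length \(1\) from the midpoint of \(mw\). Projecting onto the \(a,b\)-path gives \(p=m\) and \(\lambda=3/2\), which fails at \(w\), yet \(\{z,a,b,w\}\) is realisable. Instead, test whether \(\delta|_{W\cup\{z\}}\) is a tree metric. If it is not, a violating triple or quadruple must contain \(z\) and has all pairs fixed. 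If it is, read \(p_z\) and \(\lambda_z\) off its reduced tree.
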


\begin{proof}
Let \(Z\) be the set of vertices incident with an edge of \(S\), and let
\(W:=V(G)\setminus Z\).  Thus \(|Z|\le 2k\), and every pair meeting \(W\) is
fixed.

If \(|W|\le 1\), then \(|V(G)|\le 2k+1\).  We apply Lemma~\ref{lem:tree-small-completion}
to \(X_0:=V(G)\), \(F_0:=E(G) \setminus S\), and the prescribed values \(\mu=\delta\) on \(E(G)\setminus S\).  If it succeeds, we have a completion.  If it
fails, then \(F:=E(G)\setminus S\) has size \(\mathcal{O}(k^2)\) and is valid by
Observation~\ref{obs:tree-local-witness}.  We may therefore assume from now on
that \(|W|\ge 2\).

First test whether \(\delta|_W\) is a tree metric.  If not, let
\(X_0\subseteq W\) be a violating triple or quadruple.  Since all edges of
\(G[X_0]\) are fixed, Observation~\ref{obs:tree-local-witness} allows us to
return \(F:=E(G[X_0])\), which has size at most six.  Otherwise let \(T_W\) be
the reduced representing tree of \(\delta|_W\).

Next consider each \(z\in Z\).  Test whether \(\delta|_{W\cup\{z\}}\) is a tree
metric.  If this fails, choose a violating triple or quadruple
\(X_0\subseteq W\cup\{z\}\).  Since \(\delta|_W\) is already a tree metric, the
set \(X_0\) contains \(z\); every edge of \(G[X_0]\) is fixed, because the only
vertex of \(Z\) in \(X_0\) is \(z\).  Again we return
\(F:=E(G[X_0])\).
 
Let \(T_{W,z}\) be the reduced representing tree of
\(\delta|_{W\cup\{z\}}\). The minimal subtree of \(T_{W,z}\) spanning \(W\),
after suppressing unlabelled degree-two vertices, is canonically isometric to
\(T_W\). Under this identification, let \(p_z\) be the first point of \(T_W\)
met by the path from \(z\) to \(T_W\), and let
\(\lambda_z=\dist_{T_{W,z}}(z,p_z)\). Then, for every \(w\in W\): 
\begin{equation}
    \label{eq:attachment}
\delta(zw)=\lambda_z+\dist_{T_W}(p_z,w)
\end{equation}

Uniqueness follows from Lemma~\ref{clm:anchor}, applied with \(A=W\).
Both \(p_z\) and \(\lambda_z\) can be read from \(T_{W,z}\), and hence are
computable in polynomial time. Moreover, every completion extending \(S\) restricts to the same fixed tree
metric on \(W\cup\{z\}\), and therefore induces the same pair
\((p_z,\lambda_z)\).

Now consider every pair \(z,z'\in Z\) with distinct gates
\(p_z\neq p_{z'}\).  In any completion, the path from \(z\) to \(z'\) must pass
through \(p_z\) and \(p_{z'}\). Indeed, if the branches containing \(z\) and \(z'\) met before reaching their
distinct gates, their union with the \(p_z,p_{z'}\)-path in \(T_W\) would
contain a cycle. Therefore every completion forces
\begin{equation}
\label{eq:forced-distinct-projections}
    d(zz')=
    \lambda_z+\dist_{T_W}(p_z,p_{z'})+\lambda_{z'} .
\end{equation}
If \(zz'\in S\), no test is needed.  If \(zz'\notin S\) and the fixed value
\(\delta(zz')\) differs from the right-hand side of
\eqref{eq:forced-distinct-projections}, we build a constant-size certificate.
Let
\[
    A:=A_{p_z}\cup A_{p_{z'}},
    \qquad X_0:=\{z,z'\}\cup A .
\]
Then \(|X_0|\le 6\), and every edge of \(G[X_0]\) is fixed. Moreover, since \(A\) contains both \(A_{p_z}\) and \(A_{p_{z'}}\), we have
\(p_z,p_{z'}\in\Sigma(A)\). We claim that no
tree metric on \(X_0\) agrees with \(\delta\).  Indeed, in any such tree metric,
the subtree spanned by \(A\) identifies isometrically with \(\Sigma(A)\subseteq
T_W\).  Let \(q\in\Sigma(A)\) be the point where the path from \(z\) first meets
this subtree, and let \(\lambda=d(z,q)\).  For every \(a\in A\),
\[
    \delta(za)=\lambda+\dist(q,a)
              =\lambda_z+\dist_{T_W}(p_z,a).
\]
 Since
\(p_z,q\in\Sigma(A)\), Lemma~\ref{clm:anchor} gives \(q=p_z\) and
\(\lambda=\lambda_z\).  The same argument applies to \(z'\).  Thus the distance
between \(z\) and \(z'\) is forced to be the right-hand side of
\eqref{eq:forced-distinct-projections}, contradicting the fixed value
\(\delta(zz')\).  By Observation~\ref{obs:tree-local-witness}, we return
\(F:=E(G[X_0])\setminus S\).

After all distinct-gate distances have been verified, constraints involving
different gate classes are completely determined by \(T_W\) and the values
\(\lambda_z\). Consequently, the remaining completion problems for the
different sets \(Z_p\) are independent. It remains to handle vertices with a common gate. For each attachment point
\(p\), let \(Z_p:=\{z\in Z:p_z=p\}\).  The only constraints not yet accounted for
inside this group are the distances among vertices of \(Z_p\).  They form the
following rooted completion instance on \(Z_p\cup\{r_p\}\), where \(r_p\) is a
formal root representing \(p\):
\[
    d(r_p,z)=\lambda_z \quad (z\in Z_p),
    \qquad
    d(zz')=\delta(zz') \quad \text{for } zz'\notin S,
\]
and pairs \(zz'\in S\cap E(G[Z_p])\) are free.  The vertices of \(Z_p\) must be
pairwise distinct, while \(r_p\) is allowed to coincide with a vertex \(z\) when
\(\lambda_z=0\).  If \(|Z_p|\le 1\), this rooted instance is feasible.  Otherwise
we solve it using the rooted variant of Lemma~\ref{lem:tree-small-completion},
in time \(k^{\mathcal{O}(k)}\).
 
Suppose the rooted instance for some \(p\) is infeasible.  Set
\(X_0:=Z_p\cup A_p\).  We claim that no tree metric on \(X_0\) agrees with
\(\delta\) on \(E(G[X_0])\setminus S\).  As above, Lemma~\ref{clm:anchor}, applied
with \(A=A_p\), forces every \(z\in Z_p\) to attach at \(p\) with pendant length
\(\lambda_z\).  Restricting such a tree to \(Z_p\cup\{p\}\) would give a feasible
rooted completion, a contradiction.  Therefore
\(F:=E(G[X_0])\setminus S\) is valid by Observation~\ref{obs:tree-local-witness};
its size is \(\mathcal{O}(k^2)\).

Suppose now that all the previous tests were successful and let us show how to construct the final tree from \(T_W\).  Start with \(T_W\).  For each
attachment point \(p\), take a feasible rooted tree for \(Z_p\cup\{r_p\}\) and
identify \(r_p\) with the point \(p\) of \(T_W\), subdividing edges of \(T_W\) as
needed when attachment points lie in their interiors.  The resulting weighted
tree realises \(\delta\) on \(W\) by construction, realises every fixed distance
from \(Z\) to \(W\) by \eqref{eq:attachment}, realises every fixed distance
between distinct gate classes by \eqref{eq:forced-distinct-projections}, and
realises the fixed distances inside each class \(Z_p\) by the rooted completion.
Thus it is a tree-metric completion of \(\delta|_{E(G)\setminus S}\).

The only super-polynomial work is the enumeration of tree topologies on
\(\mathcal{O}(k)\) points in Lemma~\ref{lem:tree-small-completion}.  Hence the
running time is \(k^{\mathcal{O}(k)}n^{\mathcal{O}(1)}\), and every returned
certificate has size \(\mathcal{O}(k^2)\).
\end{proof}

\begin{proof}[Proof of Theorem~\ref{thm:tree-metric-fpt}]
We build a bounded search tree.  A node is a set \(S\subseteq E(G)\) of pairs
chosen so far.  The root is \(S=\emptyset\).  If \(|S|>k\), we reject the node.
Otherwise we apply Lemma~\ref{lem:tree-partial-completion}.  If a completion
exists, then \(S\) is a feasible support and we accept.  If no completion
exists, the lemma returns a set \(F\subseteq E(G)\setminus S\) of size
\(\mathcal{O}(k^2)\) that intersects every feasible support extending \(S\); we
branch on all choices \(e\in F\) and recurse on \(S\cup\{e\}\).

For correctness, let \(S^*\) be any feasible support of size at most \(k\).  At
any node with \(S\subseteq S^*\), either the oracle finds a completion and the
algorithm accepts, or it returns a set \(F\) meeting \(S^*\setminus S\).  Hence
one child again satisfies \(S\cup\{e\}\subseteq S^*\).  Therefore the search tree
reaches \(S^*\), unless it accepts earlier.

The depth is at most \(k\), and the branching factor is \(\mathcal{O}(k^2)\).
Each node is processed in \(k^{\mathcal{O}(k)}n^{\mathcal{O}(1)}\) time by
Lemma~\ref{lem:tree-partial-completion}.  This gives the claimed running time.
\end{proof}

\section{Concluding Remarks}

The main open problem remains whether \MVD admits a constant-factor
approximation in polynomial time~\cite{fan2018metric,cohenaddad2022fitting}.
Our kernel shows that it is enough to solve this question on instances with
\(\mathcal{O}(\mathrm{OPT})\) vertices: running the kernel at
\(k=\mathrm{OPT}\) preserves the optimum and leaves at most
\(6\,\mathrm{OPT}\) vertices.  Thus the obstacle is not large ground sets, but
rather dense instances in which the optimum is already linear in the number of
vertices.

Several quantitative questions also remain open.  We did not try to optimize
the constants in the kernels: can the bounds \(6k\) and \(7k\) be improved?
Similarly, the randomized FPT algorithm has success probability
\(2^{-11k}\), coming from the random-separation bound
\(|O\cup Q|\le 11k\).  Although ETH rules out a subexponential dependence on
\(k\), the best possible base of the exponent is unknown.

For structured metrics, the parameterized picture is still incomplete.
Ultrametrics follow from \MVD by the rescaling of Section~\ref{sec:uvd}, while
tree metrics require a separate argument and currently yield only a
\(k^{\mathcal{O}(k)}n^{\mathcal{O}(1)}\)-time algorithm.  The two bottlenecks
are the \(\mathcal{O}(k^2)\)-size obstruction in
Lemma~\ref{lem:tree-partial-completion} and the enumeration of
\(k^{\mathcal{O}(k)}\) tree topologies.  Whether
\Problem{Tree Metric Violation Distance} admits a
\(2^{\mathcal{O}(k)}n^{\mathcal{O}(1)}\)-time algorithm, or a polynomial
kernel, remains open.

\section*{Declarations}
\paragraph{Use of generative AI.}
We used Claude (Anthropic) and OpenAI ChatGPT  to assist with
(i)~conducting a literature survey,
(ii)~tightening and improving the language and exposition throughout the paper,
(iii)~suggesting formulations for parts of the introduction, 
(iv)~generating drafts of proofs throughout the paper; in all cases based on scaffolding, proof strategies, and technical direction provided by the authors, and
(vi)~preparing an initial draft for this disclosure statement. 
The tool materially affected the exposition across all sections of the paper.
The authors verified the correctness and originality of all content, including all mathematical statements, proofs, and references.

\bibliographystyle{alpha}
\bibliography{bib_metric}

\end{document}